\newtheorem{theorem}{\bf Theorem}[section]
\newtheorem{proposition}[theorem]{\bf Proposition}
\newtheorem{corollary}[theorem]{\bf Corollary}
\newtheorem{definition}[theorem]{\bf Definition}
\newtheorem{remark}[theorem]{\bf Remark}
\newcommand{\beq}{\begin{equation}}
\newcommand{\bea}[1]{\begin{array}{#1} }
\newcommand{\eeq}{ \end{equation}}
\newcommand{\ea}{ \end{array}}
\newcommand{\ud}{\mathrm{d}}
\newcommand{\incrx}[1]{\left(#1(t_{i+1}^n)-#1(t_i^n)\right)}
\newcommand{\fqv}[1] {finite quadratic variation along $#1$ }
\newcommand{\prop}[1] {Proposition \ref{prop:#1}}
\newcommand{\thm}[1] {Theorem \ref{thm:#1}}
\newcommand{\defin}[1] {Definition \ref{def:#1}}
\newcommand{\rmk}[1] {Remark \ref{rmk:#1}}
\newcommand{\Sec}[1] {Section \ref{sec:#1}}
\newcommand{\eq}[1] {\eqref{eq:#1}}
\newcommand{\norm}[1]{\left\lVert #1 \right\rVert} 
\newcommand{\abs}[1]{\left\lvert #1 \right\rvert} 
\def \t {{\tau}}
\def \d {{\delta}}
\def \k {{\kappa}}
\def \s {{\sigma}}
\def \w {{\omega}}
\def \e {{\epsilon}}
\def\l {\lambda}
\def \O {{\Omega}}
\def \De {{\Delta}}
\def \L {{\Lambda}}
\def \R  {{\mathbb {R}}} 
\def \N {{\mathcal {N}}}
\def \C {{\mathcal {C}}}
\def \P {{\cal{P}}}
\def \F {\mathcal{F}}  
\def \H {\mathcal{H}}
\def \K {\mathcal{K}}
\def \S {\mathcal{S}}
\def \Si {\Sigma}
\def \W {\mathcal{W}}  
\def \NN {{\mathbb {N}}}
\def \VV  {{\mathbb {V}}}
\def \PP  {{\mathbb {P}}}
\def \FF  {{\mathbb {F}}}  
\def \CC {\mathbb{C}}
\def \BB {\mathbb{B}}
\def \Ft {\left(\mathcal{F}_t\right)_{t\in[0,T]}}  
\def \Cloc {\mathbb{C}^{1,2}_{loc}}
\def \Cb {\mathbb{C}^{1,2}_{b}}
\def \DT {D([0,T],\R^d)}
\def \Ft {\left(\mathcal{F}_t\right)_{t\in[0,T]}}  
\def \Cloc {\mathbb{C}^{1,2}_{loc}}
\def \Cb {\mathbb{C}^{1,2}_{b}}
\def \DT {D([0,T],\R^d)}
\def \ind {{\mathds{1}}}
\def \Limn {\lim_{n\rightarrow \infty}}
\def \limn {\xrightarrow[n\rightarrow \infty]{}}
\def \zs {, \qquad t\in[0,T]}
\def \ito {It\^o}
\def \follmer {F\"{o}llmer}
\def \cadlag {c\`adl\`ag}
\def \caglad {c\`agl\`ad}
\def \lf {\left(}
\def \rg {\right)}
\def \ps {$(\O,\F,(\F_t)_{0\leq t\leq T},\PP)$}
\def \OT {$[0,T]$}
\def \naf {non-anticipative functional}
\def \vd {\nabla_{\w}}
\def \hd {\mathcal{D}}
\def \dinf {\textrm{d}_{\infty}}
\def \tr {\mathrm{tr}}
\title{A pathwise approach to continuous-time trading
\footnote{We gratefully acknowledge financial support from the ERC (249415-RMAC), the NATIXIS Foundation for Quantitative Research, Scuola Normale Superiore di Pisa, and the Isaac Newton Institute for Mathematical Sciences (Cambridge, UK).}}
\author{Candia Riga\thanks{University of Zurich, Email: candia.riga@uzh.ch}}
\date{January 28, 2016}
\begin{document}
\maketitle

\begin{abstract}
This paper develops a mathematical framework for the analysis of continuous\hyp time trading strategies which, in contrast to the classical setting of continuous-time mathematical finance, does not rely on stochastic integrals or other probabilistic notions.

Our purely analytic framework allows for the derivation of a pathwise self-financial condition for continuous-time trading strategies, which is consistent with the classical definition in case a probability model is introduced.
Our first proposition provides us with a pathwise definition of the gain process for a large class of continuous-time, path-dependent, self-finacing trading strategies,  including the important class of \lq delta-hedging\rq\ strategies, and is based on the recently developed \lq non-anticipative functional calculus\rq. Two versions of the statement involve respectively continuous and \cadlag\ price paths.
The second proposition is a pathwise replication result that generalizes the ones obtained in the classical framework of diffusion models. Moreover, it gives an explicit and purely pathwise formula for the hedging error of delta-hedging strategies for path-dependent derivatives across a given set of scenarios.
We also provide an economic justification of our main assumption on price paths.



\end{abstract}

\section{Introduction}

Since the emergence of modern mathematical finance, the common framework has been to model the financial market as a filtered probability space \ps\ on which the prices of liquid traded stocks are represented by stochastic processes $X=(X_t)_{t\geq0}$ and the payoffs of derivatives as functionals of the underlying price process. The probability measure $\PP$, also called \textit{real world, historical, physical} or \textit{objective} probability, tries to capture the observed patterns and, in the equilibrium interpretation, represents the (subjective) expectation of the ``representative investor''. 
However, the choice of an objective probability measure is not obvious and always encompasses a certain amount of model risk and model ambiguity.
Recently, there has been a growing emphasis on the dangerous consequences of relying on a specific probabilistic model.
The concept of the so-called \textit{Knightian uncertainty}, introduced way back in 1921 by Frank Knight~\citep{knight} while distinguishing between ``risk'' and ``uncertainty'', is still as relevant today and led to a new challenging research area in Mathematical Finance.
More fundamentally, the existence of a single objective probability does not even make sense, agreeing with the criticism raised by \citet{definetti31,definetti37}.

In the growing flow of literature addressing the issue of model ambiguity, we may recognize two approaches: \textbf{model-independent}, where the single probability measure $\PP$ is replaced by a family $\P$ of plausible probability measures, and \textbf{model-free}, that eliminates probabilistic a priori assumptions altogether, and relies instead on pathwise statements.
This paper takes a model-free approach, which also provides a solution to another problem affecting the classical probabilistic modeling of financial markets.
Indeed, in continuous-time financial models, the gain process of a self-financing trading strategy is represented as a stochastic integral.
However, despite the elegance of the probabilistic representation, some real concerns arise.
Beside the issue of the impossible consensus on a probability measure, the representation of the gain from trading lacks a pathwise meaning: while being a limit in probability of approximating Riemann sums, the stochastic integral does not have a well-defined value on a given \lq state of the world\rq.
This causes a gap in the use of probabilistic models, in the sense that it is not possible to compute the gain of a trading portfolio given the realized trajectory of the underlying stock price, which constitutes a drawback in terms of financial interpretation.

Beginning in the nineties, a new branch of the literature has addressed the issue of pathwise integration in the context of financial mathematics.
A breakthrough in this direction was the seminal paper written by \citet{follmer} in 1981. He proved a pathwise version of the \ito\ formula, conceiving the construction of an integral of a $C^1$-class function of a \cadlag\ path with respect to that path itself, as a limit of non-anticipative Riemann sums. His purely analytical approach does not ask for any probabilistic structure, which may instead come into play only at a later point by considering stochastic processes that satisfy almost surely, i.e. for almost all paths, the analytical requirements. In this case, the so-called \emph{F\"ollmer integral} provides a path-by-path construction of the stochastic integral.
F\"ollmer's framework turns out to be of main interest in finance (see also \cite{schied-CPPI}, \cite[Sections 4,5]{follmer-schied}, and \cite[Chapter 2]{sondermann}) as it allows to avoid any probabilistic assumption on the dynamics of traded stocks and consequently to avoid any model risk/ambiguity. Reasonably, only observed price trajectories are involved.
In 1994, \citet{bickwill} provided an interesting economic interpretation of F\"ollmer's pathwise calculus, leading to new perspectives in the mathematical modeling of financial markets.
Bick and Willinger reduced the computation of the initial cost of a replicating trading strategy to an exercise of analysis. Moreover, for a given price trajectory (state of the world), they showed one is able to compute the outcome of a given trading strategy, that is the gain from trade. 
Other contributions towards the pathwise characterization of stochastic integrals have been obtained via probabilistic techniques by Wong and Zakai (1965), \citet{bichteler}, \citet{karandikar} and \citet{nutz-int} (only existence), and via convergence of discrete-time economies by \citet{willtaq}.

In this paper, we set our framework in a similar way to \cite{bickwill}, and we enhance it  by the aid of the pathwise calculus for \naf s, introduced by \citet{dupire} and developed by \citet{contf2010}. This theory extends the F\"ollmer's pathwise calculus to a large class of non-anticipative functionals.

Once we have at our disposal a pathwise notion of the gain from trading, it would be interesting to examine the analytical conditions imposed on the price paths in relationship with no-arbitrage-like notions within a model-free framework.

An important series of papers on this subject comes from Vladimir Vovk (see e.g. \citet{vovk-vol,vovk-proba,vovk-rough,vovk-cadlag}). 
 He introduced an outer measure (see \cite[Definition 1.7.1]{tao} for the definition of \emph{outer measure}) on the space of possible price paths, called \emph{upper price} (\defin{upperP} below), as the minimum super-replication price of a very special class of European contingent claims. The important intuition behind this notion of upper price is that the sets of price paths with zero upper price, called \emph{null sets}, allow for the infinite gain of a positive portfolio capital with unitary initial endowment. The need to guarantee some type of market efficiency in a financial market leads to discard the null sets.
\citeauthor{vovk-proba} says that a property holds for \emph{typical paths} if the set of paths where it does not hold is null, i.e. has zero upper price. 
Let us give some details.
\begin{definition}[Vovk's upper price]\label{def:upperP}
  The \emph{upper price} of a set $E\subset\O$ is defined as 
\beq\label{eq:upperP}
\bar\PP(E):=\inf_{S\in\S}\{S(0)|\,\forall\w\in\O,\; S(T,\w)\geq\ind_E(\w)\},
\eeq
where $\S$ is the set of all \emph{positive capital processes} $S$, that is: $S=\sum_{n=1}^\infty\K^{c_n,G_n}$, where $\K^{c_n,G_n}$ are the portfolio values of bounded simple predictable strategies trading at a non-decreasing infinite sequence of stopping times $\{\t^n_i\}_{i\geq1}$, such that for all $\w\in\O$ $\t^n_i(\w)=\infty$ for all but finitely many $i\in\NN$, with initial capitals $c_n$ and with the constraints $\K^{c_n,G_n}\geq0$ on $[0,T]\times\O$ for all $n\in\NN$ and $\sum_{n=1}^\infty c_n<\infty$.
\end{definition}
It is immediate to see that $\bar\PP(E)=0$ if and only if there exists a positive capital process $S$ with initial capital $S(0)=1$ and infinite capital at time $T$ on all paths in $E$, i.e. $S(T,\w)=\infty$ for all $\w\in E$.

Depending on what space $\O$ is considered, Vovk obtained specific results. In particular, he investigated properties of typical paths that concern their measure of variability. The most general framework considered is $\O=D([0,T],\R_+)$. He proved in \cite{vovk-rough} that typical paths $\w$ have a \emph{$p$-variation index} less or equal to 2, which means that the $p$-variation is finite for all $p>2$, but we have no information for $p=2$ (a stronger result is stated in \cite[Proposition 1]{vovk-rough}). If we relax the positivity and we restrict to \cadlag\ path with all components having \lq moderate jumps\rq\ in the sense of \eq{mod-jumps}, then \citet{vovk-cadlag} obtained appealing results regarding the quadratic variation of typical paths along special sequences of random partitions.
Indeed, by adding a control on the size of the jumps, in the sense of considering the sample space $\O_\psi$, defined by
\beq\label{eq:mod-jumps}
\O_\psi:=\left\{\w\in D([0,T],\R)\bigg|\,\forall t\in(0,T],\;\abs{\De\w(t)}\leq\psi\lf \sup_{s\in[0,t)}\abs{\w(s)}\rg\right\}
\eeq
for a given non-decreasing function $\psi:[0,\infty)$, \citet{vovk-cadlag} obtained finer results. 
In particular, he proved that typical paths have a property that we call \emph{Vovk's quadratic variation} (\defin{qv-vovk} in the Appendix) along a special nested sequence of random time partitions. 
The same result holds in the multi-dimensional case, that is on the space $$\O_{\psi}^d:=\left\{\w=(\w_1,\ldots,\w_d)\in D([0,T],\R^d):\  \w_i\in\O_{\psi},\ i=1,\ldots,d\right\}.$$
Moreover, it applies to all nested sequences of random partitions of dyadic type (\cite[Proposition 1]{vovk-cadlag}), and any two sequences of dyadic type give the same value of Vovk's quadratic variation for typical paths (\cite[Proposition 2]{vovk-cadlag}). A nested sequence of partitions is called of \emph{dyadic type} for the coordinate process on $D([0,T],\R)$ if it is composed of stopping times such that there exist a polynomial $p$ and a constant $C>0$ and
\begin{enumerate}
\item for all $\w\in\O_\psi$, $n\in\NN_0$, $0\leq s<t\leq T$, if $\abs{\w(t)-\w(s)}>C2^{-n}$, then there is an element of the $n^{th}$ partition which belongs to $(s,t]$,
\item for typical $\w$, from some $n$ on, the number of finite elements of the $n^{th}$ partition is at most $p(n)2^{2n}$.
\end{enumerate}

When the sample space is $C([0,T],\R)$ \cite{vovk-proba} proved that typical paths are either constant or have a $p$-variation which is finite for all $p>2$ and infinite for $p\leq2$ (stronger results are stated in \cite[Corollaries 4.6,4.7]{vovk-proba}.
Note that the situation changes remarkably from the space of \cadlag\ paths to the space of continuous paths. Indeed, no (positive) \cadlag\ path which is bounded away from zero and has finite total variation can belong to a null set in $D([0,T],\R^d_+)$, while all continuous paths with finite total variation belong to a null set in $C([0,T],\R^d_+)$.

A similar notion of outer measure is introduced by \citet{perk-promel} (see also \citet{perkowski-thesis}), which is more intuitive in terms of hedging strategies. He considers portfolio values that are limits of simple predictable portfolios with the same positive initial capital and whose correspondent simple trading strategies never risk more than the initial capital.
\begin{definition}[Definition 3.2.1 in \cite{perkowski-thesis}]\label{def:outerP}
  The \emph{outer content} of a set $E\subset\O:=C([0,T],\R^d)$ is defined as 
\beq\label{eq:outerP}
\tilde\PP(E):=\inf_{(H^n)_{n\geq1}\in\H_{\l,s}}\{\l|\,\forall\w\in\O,\; \liminf_{n\to\infty}(\l+(H^n\bullet\w)(T))\geq\ind_E(\w)\},
\eeq
where $\H_{\l,s}$ is the set of all \emph{$\l$-admissible simple strategies}, that is of bounded simple predictable strategies $H^n$ trading at a non-decreasing infinite sequence of stopping times $\{\t^n_i\}_{i\geq1}$, $\t^n_i(\w)=\infty$ for all but finitely many $i\in\NN$ for all $\w\in\O$, such that $(H^n\bullet\w)(t)\geq-\l$ for all $(t,\w)\in[0,T]\times\O$.
\end{definition}
Analogously to Vovk's upper price, the $\tilde\PP$-null sets are identified with the sets where the inferior limit of some sequence of 1-admissible simple strategies brings infinite capital at time $T$. This characterization is shown to be a model-free interpretation of the condition of \emph{no arbitrage of the first kind} (NA1) from mathematical finance, also referred to as \emph{no unbounded profit with bounded risk} (see e.g. \cite{kk2007,kardaras}). Indeed, in a financial model where the price process is a semimartingale on some probability space $(\O,\F,\PP)$, the (NA1) property holds if the set $\{1+(H\bullet S)(T),\,H\in\H_1\}$ is bounded in $\PP$-probability, i.e. if 
$$\lim_{c\to\infty}\sup_{H\in\H_{1,s}}\PP(1+(H\bullet S)(T)\geq c)=0.$$
On the other hand, \cite[Proposition 3.28]{perkowski-thesis} proved that an event $A\in\F$ which is $\tilde\PP$-null has zero probability for any probability measure on $(\O,\F)$ such that the coordinate process satisfies (NA1).
However, the characterization of null sets in \cite{perk-promel,perkowski-thesis} is possibly weaker than Vovk's one. In fact, the outer measure $\tilde\PP$ is dominated by the outer measure $\bar\PP$.


A distinct approach to a model-free characterization of arbitrage is proposed by \citet{riedel}, although he only allows for static hedging. He considers a Polish space $(\O,\ud)$ with the Borel sigma-field and he assumes that there are $D$ uncertain assets in the market with known non-negative prices $f_d\geq0$ at time 0 and uncertain values $S_d$ at time $T$, which are continuous on  $(\O,\ud)$, $d=1,\ldots,D$. A portfolio is a vector $\pi$ in $\R^{D+1}$ and it is called an \emph{arbitrage} if $\pi\cdot f\leq0$, $\pi\cdot S\geq0$ and $\pi\cdot S(\w)>0$ for some $\w\in\O$, where $f_0=S_0=1$. Thus the classical ``almost surely'' is replaced by ``for all scenarios'' and ``with positive probability'' is replaced by ``for some scenarios''.
The main theorem in \cite{riedel} is a model-free version of the FTAP and states that the market is \emph{arbitrage-free} if and only if there exists a \emph{full support martingale measure}, that is a probability measure whose topological support in the polish space of reference is the full space and under which the expectation of the final prices $S$ is equal to the initial prices $f$.
This is proven thanks to the continuity assumption of $S(\w)$ in $\w$ on one side and a separation argument on the other side. Even without a prior probability assumption, it shows that, if there are no (static) arbitrages in the market, it is possible to introduce a pricing probability measure, which assigns positive probability to all open sets.

The sequence of trading times partitions considered in the present paper is both dense and nested, condition under which our notion of quadratic variation is equivalent to Vovk's one (see \cite[Section 1.1.1]{riga-thesis} for an investigation of different notions of pathwise quadratic variation and their relationship).

The outline of the paper is the following. 
Section \ref{sec:functionals} introduces the notation and reviews some key concepts and results of the pathwise functional calculus introduced in \cite{contf2010}.

In \Sec{setting}, we set our analytical framework and we start by defining \emph{simple trading strategies}, whose trading times are covered by the elements of a given sequence $\Pi$ of partitions of the time horizon $[0,T]$ and for which the self-financing condition is straightforward.
In \Sec{self-fin}, we provide equivalent self-financing conditions for (non-simple) trading strategies on any set of paths, whose gain from trading is the limit of gains of simple strategies and satisfies the pathwise counterpart equation of the classical self-financing condition. Similar conditions were assumed in \cite{bickwill} for convergence of general trading strategies.
\Sec{gain} contains the first main result: in \prop{G} for the continuous case and in \prop{G-cadlag} for the \cadlag\ case, we obtain the path-by-path computability of the gain of path-dependent trading strategies in a certain class of self-financing strategies on the set of paths with finite quadratic variation along $\Pi$.
For example, in the continuous case, for dynamic risky stock positions $\phi$ in the vector space of \emph{vertical 1-forms}, 
the gain of the corresponding self-financing trading strategy is well-defined as a \cadlag\ process $G(\cdot,\cdot;\phi)$ such that
 \begin{align*}
   G(t,\w;\phi)=&\int_0^t \phi(u,\w_u)\cdot\ud^{\Pi}\w \\
   =&\lim_{n\rightarrow\infty}\sum_{t^n_i\in\pi^n, t^n_i\leq t}\phi(t_i^n,\w^{n}_{t^n_i})\cdot(\w(t_{i+1}^n)-\w(t_i^n))
  \end{align*}
for all continuous paths of finite quadratic variation along $\Pi$, where $\w^n$ is a piecewise constant approximation of $\w$ defined in \eq{wn}. 
In \Sec{replication}, we present a pathwise replication result, \prop{hedge}, that can be seen as the model-free and path-dependent counterpart of the well known pricing PDE in mathematical finance, giving furthermore an explicit formula for the \emph{hedging error}. That is, if a \lq smooth\rq\ \naf\ $F$ solves
$$\left\{\bea{ll}
\hd F(t,\w_t)+\frac12\tr\lf A(t)\cdot\vd^2F(t,\w_t)\rg=0,\quad t\in[0,T), \w\in Q_A(\Pi) \\
F(T,\w)=H(\w),
\ea\right.$$
were $H$ is a continuous (in sup norm) payoff functional and $Q_A(\Pi)$ is the set of paths with absolutely continuous quadratic variation along $\Pi$ with density $A$, then the profit and loss of the self-financing trading strategy on $Q(\O^0,\Pi)$ with initial investment $F(0,\cdot)$ and stock holdings process $\vd F$ against the sale of the (path-dependent) contingent claim $H$ with maturity $T$ is
\beq\label{eq:er}
\frac12\int_{0}^T\tr\lf\lf A(t)-\tilde A(t)\rg\vd^2F(t,\w_t)\rg \ud t
\eeq
in all price scenarios $\w\in Q_{\tilde A}(\Pi)$.
In particular, the delta-hedging strategy described above has a value that is equal to $H$ at time $T$ in any scenario in $Q_A(\Pi)$ and to $F(t,\w_t)$ at any time $t\in[0,T]$ in any scenario $\w\in Q(\O^0,\Pi)$.
The explicit error formula \eq{er} is the purely analytical counterpart of the probabilistic formula  given in \cite{elkaroui}, where a mis-specification of volatility is considered in a stochastic framework, furthermore extended to the path-dependent case.
Finally, in \Sec{plausible}, we discuss on how the condition of finite quadratic variation that we assume on price paths may be justified from an economic point of view. 
The Appendix contains a comparison between our notion of quadratic variation and the other pathwise notions in the literature.

\section{Notation: pathwise functional calculus}\label{sec:functionals}

We first introduce some notations and summarize some key concepts and results of the pathwise functional calculus introduced in \cite{contf2010,cont-notes} that we use throughout this paper. We use the usual notation $\DT$ for the set of $\R^d$-valued \cadlag\ functions on \OT, that we generically call \textit{\cadlag\ paths}.
Let $t\in[0,T]$. Denote, for a \cadlag\ path $x\in\DT$,
\begin{itemize}
\item $x(t)\in\R^d$ its value at $t$;
\item $x_t=x(t\wedge\cdot)\in\DT$ its path \lq stopped\rq\ at time $t$;
\item $x_{t-}=x\ind_{[0,t)}+x(t-)\ind_{[t,T]}\in\DT$;
\item for $\d\in\R^d$, $x_t^\d=x_t+\d\ind_{[t,T]}\in\DT$ the \textit{vertical perturbation} of  $x$ in direction $\d$ of the  'future' portion of the path.
\end{itemize}

Then, define the \textit{space of stopped paths} $$\L_T:=\{(t,x_t):\:(t,x)\in[0,T]\times\DT\}$$ as  the quotient of $[0,T]\times\DT$ by the equivalence relation $$\forall(t,x),(t',x')\in[0,T]\times\DT,\quad(t,x)\sim(t',x') \iff t=t',x_t=x'_{t}.$$ The space $\L_T$ is equipped with a distance $\dinf$, defined by
$$\dinf((t,x),(t',x'))=\sup_{u\in[0,T]}|x(u\wedge t)-x'(u\wedge t')|+|t-t'|=||x_t-x'_{t'}||_\infty+|t-t'|,$$
for all $(t,x),(t',x')\in\L_T$.
$(\L_T,\dinf)$ is then a complete metric space and the subset of continuous stopped paths,
$$\W_T:=\{(t,x)\in\L_T:\,x\in C([0,T],\R^d)\}$$
is a closed subspace of $(\L_T,\dinf)$.
A \textit{non-anticipative functional} on $\DT$ is defined as a measurable map $F:(\L_T,d_\infty)\to\R^d$.
\begin{definition}\label{def:regF}A \naf\ $F$ is:
\begin{itemize}
\item \emph{jointly-continuous} if $F:(\L_T,d_\infty)\to\R^d$ is a continuous map.
\item \emph{continuous at fixed times} if for all $t\in[0,T],$\\
$$F(t,\cdot):( \DT ,||\cdot||_\infty )\mapsto\R $$
is continuous.
\item \emph{left-continuous}, i.e. $F\in\CC_l^{0,0}(\L_T)$, if
$$\bea{c}\forall (t,x)\in\L_T, \forall\e>0,\,\exists\eta>0:\quad \forall h\in[0,t],\,\forall (t-h,x')\in\L_T,\\
\quad \dinf((t,x),(t-h,x'))<\eta\quad\Rightarrow\quad|F(t, x)-F(t-h,x')|<\e;\ea$$
One can similarly define the set $\CC_r^{0,0}(\L_T)$ of \emph{right-continuous} functionals.
\item \emph{boundedness-preserving}, i.e. $F\in\BB(\L_T)$, if,
$$\bea{c}\forall K\subset\R^d\text{ compact, }\forall t_0\in[0,T],\,\exists C_{K,t_0}>0;\quad \forall t\in[0,t_0],\,\forall (t,x)\in\L_T,\\
x([0,t])\subset K \Rightarrow \quad \forall t\in[0,t_0],\quad |F(t,x)|<C_{K,t_0}.\ea$$
\end{itemize}
\end{definition}
\begin{remark}\label{rmk:regularity}
Note that, from the regularity of \naf s, useful pathwise regularities follow \cite{contf2010}:
\begin{enumerate}
\item If $F\in\CC_l^{0,0}(\L_T)$ then, for all $x\in\DT$, the path  
$$[0,T]\to\R^d,\quad t\mapsto F(t,x_{t-})$$ is left-continuous;
\item If $F\in\CC_r^{0,0}(\L_T)$ then, for all $x\in\DT$, the path 
$$[0,T]\to\R^d,\quad t\mapsto F(t,x_t)$$ is right-continuous;
\item If $F\in\CC^{0,0}(\L_T)$ then, for all $x\in\DT$, the path 
$$[0,T]\to\R^d,\quad t\mapsto F(t,x_t)$$ is \cadlag\ and continuous at each point where $x$ is continuous.
\end{enumerate}
\end{remark}
We recall now the notions of differentiability for \naf s.
\begin{definition}\label{def:derF}
  A \naf\ $F$ is called
\begin{itemize}
\item \emph{horizontally differentiable at} $(t,\omega)\in\L_T$ if the limit
$${\cal D} F(t,\omega)=\lim_{h\rightarrow0^+}\frac{F(t+h,\omega_{t})-F(t,\omega_t)}{h}$$
exists and is finite, in which case it is denoted $\hd F(t,x)$ and called the horizontal derivative of $F$ at $(t,\w)$.
If this holds for all $(t,\omega)\in\L_T,  t<T$,  the \naf\ $\hd F:(t,\omega)\mapsto {\cal D} F(t,\omega)$ is called the \emph{horizontal derivative} of $F$.
\item \emph{vertically differentiable at} $(t,\omega)\in\L_T$ if the map
$$e\in\R^d \mapsto F(t, \omega_t+e 1_{[t,T]})$$ is differentiable at 0. Its gradient at 0 is then denoted 
$\vd F(t, \omega)$ and called the vertical derivative of $F$ at $(t,\omega).$
If this holds for all $(t,x)\in\L_T$, then the $\R^d$-valued \naf\ $\vd F$ is called the \emph{vertical derivative} of $F$.
\end{itemize}
\end{definition}
We denote
\begin{itemize}
\item $\CC^{1,2}(\L_T)$ the set of \naf s $F$ which are
\begin{itemize}
\item horizontally differentiable with $\hd F$ continuous at fixed times,
\item two times vertically differentiable with $\vd^j F\in\CC^{0,0}_l(\L_T)$ for $j=1,2$;
\end{itemize}
\item $\CC^{1,2}_b(\L_T)$ the set of \naf s $F\in\CC^{1,2}(\L_T)$ such that $\hd F,\vd F,\ldots,\vd^2F\in\BB(\L_T)$.
\end{itemize}
The following weaker regularity condition still allows key results to hold. A \naf\ $F$ is said to be \emph{locally regular}, i.e. $F\in\Cloc(\L_T)$, if $F\in\CC^{0,0}(\L_T)$ and there exist a sequence of stopping times $(\t_k)_{k\geq1}$, $\t_0=0$, $\t_k\to_{k\to\infty}\infty$, and a family of \naf s $\{F^k\in\Cb(\L_T),\}_{k\geq0},$
such that
$$F(t,x_t)=\sum_{k\geq 0}F^k(t,x_t)\ind_{[\t_k( x),\t_{k+1}(x))}(t) \zs.$$
The interest in these classes of non-anticipative functionals stems from the functional change of variable formula \cite{contf2010,cont-notes} shown in \thm{fif-d} below.

Let $\Pi=\{\pi_n\}_{n\geq1}$ be a sequence of partitions of $[0,T]$, that is for all $n\geq1$ $\pi_n=(t_i^n)_{i=0,\ldots,m(n)},\;0=t_0^n<\ldots<t_{m(n)}^n=T$. We say that $\Pi$ is \emph{dense} if $\cup_{n\geq1}\pi_n$ is dense in $[0,T]$, or equivalently the mesh $\abs{\pi^n}:=\max_{i=1,\ldots m(n)}|t^n_i-t^n_{i-1}|$ goes to 0 as $n$ goes to infinity, and we say that $\Pi$ is \emph{nested} if $\pi_{n+1}\subset\pi_{n}$ for all $n\in\NN$. 
\begin{definition}\label{def:qv1}
  Let $\Pi$ be a dense sequence of partitions of $[0,T]$, a \cadlag\ function $x:[0,T]\to\R$ is said to be of \emph{finite quadratic variation along $\Pi$} if there exists a non-negative \cadlag\ function $[x]_\Pi:[0,T]\to\R_+$ such that
\beq\label{eq:qv}
\forall t\in[0,T],\quad[x]_\Pi(t)=\Limn\sum_{\stackrel{i=0,\ldots,m(n)-1:}{t^n_i\leq t}}(x(t^n_{i+1})-x(t^n_{i}))^2<\infty
\eeq
and 
\begin{equation} \label{eq:qv-jumps}
[x]_\Pi(t)=[x]_\Pi^c(t)+\sum_{0<s\leq t}\De x^2(s) \zs,
\end{equation}
where $[x]_\Pi^c$ is a continuous non-decreasing function and $\De x(t):=x(t)-x(t-)$ as usual.
In this case, the non-decreasing function $[x]_\Pi$ is called the \emph{quadratic variation of $x$ along $\Pi$}.
\end{definition}
Note that the quadratic variation $[x]_\Pi$ strongly depends on the sequence of partitions $\Pi$. Indeed, as remarked in \cite[Example 2.18]{cont-notes}, for any real-valued continuous function we can construct a sequence of partition along which that function has null quadratic variation.

In the multi-dimensional case, the definition is modified as follows.
\begin{definition}\label{def:qvd}
A \cadlag\ path $x:[0,T]\to\R^d$, $t\mapsto\,^t(x^1(t),\ldots,x^d(t))$, is of \emph{finite quadratic variation along $\Pi$} if, for all $1\leq i,j\leq d$, $x^i,x^i+x^j$ are of finite quadratic variation along $\Pi$. In this case, the function $[x]_\Pi$ has values in the set $\S^+(d)$ of positive symmetric $d\times d$ matrices:
$$[x]_\Pi(t)=\Limn\sum_{\stackrel{i=0,\ldots,m(n)-1:}{t^n_i\leq t}}\incrx{x}\,^t\!\incrx{x},\quad  t\in[0,T],$$
whose elements are given by
\begin{eqnarray*}
([x]_\Pi)_{i,j}(t) &=& \frac12\lf[x^i+x^j]_\Pi(t)-[x^i]_\Pi(t)-[x^j]_\Pi(t)\rg \\
 &=& [x^i,x^j]_\Pi^c(t)+\sum_{0<s\leq t}\De x^i(s)\De x^j(s) , \quad\qquad i,j=1,\ldots d.
\end{eqnarray*}
\end{definition}

For any set $U$ of \cadlag\ paths with values in $\R$ (or $\R^d$), we denote by $Q(U,\Pi)$ the subset of $U$ of paths having \fqv{\Pi}.

Note that $Q(D([0,T],\R),\Pi)$ is not a vector space, because assuming $x^1,x^2\in\penalty0 Q(D([0,T],\R),\Pi)$ does not imply $x^1+x^2\in Q(D([0,T],\R),\Pi)$ in general. This is the reason of the additional requirement $x^i+x^j\in Q(D([0,T],\R),\Pi)$ in \defin{qvd}. As remarked in \cite[Remark 2.20]{cont-notes}, the subset of paths $x$ being $C^1$-functions of a same path $\w\in D([0,T],\R^d)$, i.e. 
$$\{x\in Q(D([0,T],\R),\Pi),\;\exists f\in C^1(\R^d,\R),\,x(t)=f(\w(t))\,\forall t\in[0,T]\},$$
is instead closed with respect to the quadratic variation composed with the sum of two elements.

Henceforth, we will fix a sequence of partitions $\Pi$ and, when considering a path $x\in Q(U,\Pi)$, we will drop the subscript in the notation of its quadratic variation, denoting $[x]$ instead of $[x]_\Pi$.

Now we can state the main result of the pathwise functional calculus.
\begin{theorem}[Functional change of variable formula \cite{contf2010}]\label{thm:fif-d}
$\ $\\Let $x\in Q(\DT,\pi)$ such that
\begin{equation}  \label{eq:ass_w}
\sup\limits_{t\in[0,T]-\pi^n}|\De x(t)|\limn0
\end{equation}
and  denote:
\beq\label{eq:wn}
x^n:=\sum_{i=0}^{m(n)-1}x(t^n_{i+1}-)\ind_{[t^n_i,t^n_{i+1})}+x(T)\ind_{\{T\}},\eeq
a piecewise constant approximation of $x$ along $\pi$. 
Then, for $F\in\Cloc(\L_T)$
the limit
$$\Limn\sum_{i=0}^{m(n)-1}\vd F \lf t_i^n,x^{n,\De x(t^n_i)}_{t^n_i-}\rg(x(t_{i+1}^n)-x(t_i^n))$$
exists and, denoted by 
\beq\label{eq:int} \int_{0}^T\vd F(t,x_{t-})\cdot\ud^{\pi}x(t),\eeq
it satisfies
\begin{align}\label{eq:fif-d}  
 F(T,x_T)={}& F(0,x_{0})+\int_{0}^T\vd F(t,x_{t-})\cdot\ud^{\pi}x(t)\\
&{}+\int_{0}^T\hd F(t,x_{t-)}\ud t + \int_{0}^T\frac12\tr\lf\vd^2F(t,x_{t-})\cdot \ud[x]^c(t)\rg \nonumber\\
&{}+\sum_{t\in(0,T]}\lf F(t,x_t)-F(t,x_{t-})-\vd F(t,x_{t-})\cdot\De x(t)\rg. \nonumber
\end{align}
\end{theorem}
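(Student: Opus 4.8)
The plan is to establish the identity \eq{fif-d} first for bounded-regular functionals $F\in\Cb(\L_T)$ and then to recover the locally regular case $F\in\Cloc(\L_T)$ by localization. For $F\in\Cb(\L_T)$ the starting point is the exact telescoping identity, valid for every fixed $n$,
\begin{equation*}
F(T,x^n_{T})-F(0,x_0)=\sum_{i=0}^{m(n)-1}\left[F(t^n_{i+1},x^n_{t^n_{i+1}})-F(t^n_i,x^n_{t^n_i})\right],
\end{equation*}
where $x^n$ is the piecewise constant approximation \eq{wn}. The key algebraic step is to split each summand into a \emph{horizontal} increment (advancing time from $t^n_i$ to $t^n_{i+1}$ with the path frozen) and a \emph{vertical} increment (moving the frozen left-limit path by $\incrx{x}$), by inserting the intermediate term $F(t^n_{i+1},x^{n}_{t^n_i})$.

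The horizontal increments are handled by the fundamental theorem of calculus in the time variable: horizontal differentiability gives $F(t^n_{i+1},x^n_{t^n_i})-F(t^n_i,x^n_{t^n_i})=\int_{t^n_i}^{t^n_{i+1}}\hd F(u,\cdot)\,\ud u$, and summing produces a Riemann sum that converges to $\int_0^T\hd F(t,x_{t-})\,\ud t$ using continuity of $\hd F$ at fixed times together with the boundedness-preserving property. For the vertical increments I would apply a second-order Taylor expansion of the map $e\mapsto F(t^n_{i+1},x^n_{t^n_i}+e\ind_{[t^n_{i+1},T]})$ around $e=0$ evaluated at $e=\incrx{x}$; this yields the first-order term $\vd F\cdot\incrx{x}$, the second-order term $\tfrac12\,^t\!\incrx{x}\,\vd^2F\,\incrx{x}$, and a remainder controlled by the modulus of continuity of $\vd^2F$.

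Summing over $i$, the second-order terms converge to $\tfrac12\int_0^T\tr(\vd^2F(t,x_{t-})\,\ud[x]^c(t))$ plus the jump contributions: this is the analytic heart of the argument and combines the definition of quadratic variation along $\Pi$ (\eq{qv}--\eq{qv-jumps}) with the left-continuity $\vd^2F\in\CC^{0,0}_l(\L_T)$ to pass from the discrete quadratic-variation-weighted sum to a Stieltjes integral against $\ud[x]^c$. The logical structure I would exploit is that the left-hand side converges to $F(T,x_T)-F(0,x_0)$ (by continuity of $F$ and the uniform convergence of $x^n$ to $x$ guaranteed by \eq{ass_w}), while the horizontal, second-order, and jump terms all converge; hence the remaining first-order vertical sum must converge as well, which simultaneously proves existence of the limit defining the integral \eq{int} and yields the formula \eq{fif-d}.

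The hard part will be the jump bookkeeping together with the convergence of the second-order sum in the \cadlag\ setting. One must isolate the finitely many jumps exceeding any threshold—hypothesis \eq{ass_w} guarantees that asymptotically every such jump is captured at a partition endpoint—treat their contribution exactly (producing the explicit jump sum $\sum_{t\in(0,T]}(F(t,x_t)-F(t,x_{t-})-\vd F(t,x_{t-})\cdot\De x(t))$), and then show that the aggregate effect of the small jumps together with the Taylor remainders is negligible in the limit. Controlling these remainders uniformly requires the boundedness-preserving property to bound the derivatives on the compact range of $x$, and a careful estimate of $\sum_i|\incrx{x}|^2$ via the finite quadratic variation. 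Finally, for $F\in\Cloc(\L_T)$ I would apply the $\Cb$ result on each random interval $[\t_k,\t_{k+1})$ on which $F$ coincides with some $F^k\in\Cb(\L_T)$ and sum over the finitely many such intervals, checking that the boundary contributions telescope.
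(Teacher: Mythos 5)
The paper offers no proof of this theorem: it is imported verbatim from Cont and Fourni\'e \cite{contf2010}, so there is no internal argument to compare against. Your plan — telescoping along the piecewise-constant approximation $x^n$, splitting each increment into a horizontal part (handled by the fundamental theorem of calculus for $\hd F$) and a vertical part (second-order Taylor expansion in the perturbation direction), passing the second-order sums to $\frac12\int_0^T\tr\lf\vd^2F(t,x_{t-})\,\ud[x]^c(t)\rg$ via the quadratic-variation hypothesis and left-continuity of $\vd^2F$, isolating the finitely many large jumps, obtaining the first-order limit by difference since every other term converges, and finally localizing from $\Cb(\L_T)$ to $\Cloc(\L_T)$ — is precisely the proof strategy of that reference, so it is the same approach in the only meaningful sense here.
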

Following \cite{contf2010}, we refer to \eqref{eq:int} as the \textit{F\"ollmer integral} of $\nabla_{\omega} F$ with respect to $x$ along $\pi$. It is defined for any $x\in Q(\DT, \pi)$.
Note that the assumption \eq{ass_w} can always be removed by  including all jump times of the \cadlag\ path $x$ in the  sequence of partitions $\pi$.

\section{The setting}
\label{sec:setting}

We consider a continuous-time frictionless market open for trade during the time interval $[0,T]$, where $d$ risky (non-dividend-paying) stocks, named `stock', as well as a riskless security, named \lq bond', are traded. The latter is assumed to be the numeraire security and we refer directly to the forward stock and portfolio values, which makes this framework of simplified notation without loss of generality. 
Our setting does not make use of any (subjective) probabilistic assumption on the market dynamics and we construct trading strategies based on the realized paths of the stock prices. 

Precisely, we consider the metric space $(\O,||\cdot||_\infty)$, composed of the set of $\R^d$-valued non-negative \cadlag\ paths, $\O:=D([0,T],\R^d_+)$, equipped with the sup norm. Then, we equip it with the Borel sigma-field $\F$ and the canonical filtration $\FF=\Ft$, that is the natural filtration of the coordinate process $S$:
$$\bea{ll}S(t,\w)=\w(t),& \w\in\O,\ t\in[0,T],\\\F_t=\s\lf\{S(u),\,u\in[0,t]\}\rg,& t\in[0,T].\ea$$
Within the financial market, $\O$ represents the space of all possible trajectories of the stock prices up to the time horizon $T$, also called \textit{scenarios} or \textit{price paths}. When considering only continuous price paths, we will restrict to the subspace $\O^0:=C([0,T],\R^d_+)$.


In such analytical framework, we think of a continuous-time path\hyp dependent trading strategy as determined by the value of the initial investment and the quantities of stock and bond holdings at any time in \OT, the latter being functions of both time and price path.
\begin{definition}
  A \emph{trading strategy} in $(\O,\F)$ is any triple $(V_0,\phi,\psi)$, where $V_0:\O\to\R$ is $\F_0$-measurable and $\phi=(\phi(t,\cdot))_{t\in(0,T]},\psi=(\psi(t,\cdot))_{t\in(0,T]}$ are $\FF$-adapted \caglad\ processes on $(\O,\F)$, respectively with values in $\R^d$ and in $\R$. The portfolio value $V$ of such trading strategy at any time $t\in[0,T]$ and in any scenario $\w\in\O$ is given by $$V(t,\w;\phi,\psi)=\phi(t,\w)\cdot\w(t)+\psi(t,\w).$$
\end{definition}
Economically speaking,  the elements of the vector $\phi(t,\w)$ represent the number of each stock to be held in the trading portfolio at time $t$ in the scenario $w\in\O$, and $\psi(t,\w)$ represents the units of money invested in the bond at time $t$ in the scenario $w\in\O$. The left-continuity of the holding processes comes from the fact that any revision of the portfolio will be executed the instant just after the time the decision is made. On the other hand, their right-continuous modifications $\phi(t+,\w),\psi(t+,\w)$, defined by 
$$\phi(t+,\w):=\lim\limits_{s\searrow t}\phi(s,\w),\ \psi(t+,\w):=\lim\limits_{s\searrow t}\psi(s,\w),\quad\forall\w\in\O,\,t\in[0,T)$$
represent respectively the number of stocks and bonds in the portfolio just after any revision of the trading portfolio decided at time $t$.
The choice of strategies adapted to the canonical filtration conveys the realistic assumption that any trading decision makes use only of the price information available at the time it takes place.

We aim to identify \emph{self-financing trading strategies} in this pathwise framework, that is portfolios where changes in the stock position are necessarily financed by buying or selling bonds without adding or withdrawing any cash. In particular, we will look for those of them which trade continuously in time but still allow for an explicit computation of the gain from trading.
Unlike discrete-time models,  in the classical literature about continuous-time financial market models we don't have a general pathwise characterization of self-financing dynamic trading strategies, mainly because the self-financing condition ivolves the gain process, which is defined in terms of a stochastic integral with respect to the stock price process and is thus a purely probabilistic object. In the same way, the number of bonds which continuously rebalances the portfolio does not have a determined value in a given scenario.

\section{Self-financing strategies}
\label{sec:self-fin}

We start by considering strategies where the portfolio is rebalanced only a finite number of times, for which the self-financing condition is well established and whose gain is given by a Riemann sum.

Henceforth, we will take as given a dense nested sequence of time partitions, $\Pi=(\pi^n)_{n\geq1}$, i.e. $\pi^n=\{0=t^n_0<t^n_1<\ldots,t^n_{m(n)}=T\}$, $\pi^n\subset\pi^{n+1}$, $\abs{\pi^n}\limn\infty$.

We denote by $\Si(\Pi)$ the set of simple predictable processes whose jump times are covered by one of the partitions in $\Pi$\footnote{We could assume in more generality that the jump times are only covered by $\cup_{n\geq1}\pi^n$, but at the expense of more complicated formulas}:
\begin{align*}
\Si(\pi^n):={}&\bigg\{\phi:\;\forall i=0,\ldots,m(n)-1,\;\exists \l_i\,\F_{t^n_i}\mbox{-measurable }\R^d\mbox{-valued}\\
&\quad\mbox{random variable on }(\O,\F),\;\phi(t,\w)=\sum_{i=0}^{m(n)-1}\l_i(\w)\ind_{(t^n_i,t^n_{i+1}]}\bigg\},\\
\Si(\Pi):={}&\underset{n\geq1}\cup\Si(\pi^n).
\end{align*}

\begin{definition}
  $(V_0,\phi,\psi)$ is called a \emph{simple self-financing trading strategy} if it is a trading strategy such that $\phi\in\Si(\pi^n)$ for some $n\in\NN$,
  \beq\label{eq:simplephi}\phi(t,\w)=\sum_{i=0}^{m(n)-1}\l_i(\w)\ind_{(t^n_i,t^n_{i+1}]},\quad \l_i:(\O,\F_{t^n_i})\to\R^d\text{ measurable }\forall i\eeq 
  and
\begin{align}
\psi(t,\w;\phi)={}&V_0-\phi(0+,\w)\cdot\w(0)\label{eq:psi-sf}\\
&{}-\sum_{i=1}^{m(n)-1}\w(t^n_i\wedge t)\cdot(\phi(t^n_{i+1}\wedge t,\w)-\phi(t^n_i\wedge t,\w)) \nonumber\\
  ={}& V_0-\phi(0+,\w)\cdot\w(0)-\sum_{i=1}^{k(t,n)}\w(t^n_i)\cdot(\l_{i}(\w)-\l_{i-1}(\w)),\nonumber
\end{align}
where $k(t,n):=\max\{i\in\{1,\ldots,m\}\;:\;t^n_i<t\}$.
The \emph{gain} of such a strategy is defined at any time $t\in[0,T]$ by
\begin{align*}
G(t,\w;\phi):={}&\sum_{i=1}^{m(n)}\phi(t^n_{i}\wedge t,\w)\cdot(\w(t^n_{i}\wedge t)-\w(t^n_{i-1}\wedge t)) \\
  ={}& \sum_{i=1}^{k(t,n)}\l_{i-1}(\w)\cdot(\w(t^n_{i})-\w(t^n_{i-1}))+\l_{k(t,n)}(\w)\cdot(\w(t)-\w(t^n_{k(t,n)})).
\end{align*}
\end{definition}
In the following, when there is no ambiguity, we drop the dependence of $k$ on $t,n$ and write $k\equiv k(t,n)$.

Note that the condition \eq{psi-sf} is equivalent to requiring that the trading strategy $(V_0,\phi,\psi)$ satisfies
$$V(t,\w;\phi,\psi)=V_0+G(t,\w;\phi).$$

Since a simple self-financing trading strategy is uniquely determined by its initial investment and the stock position at all times, we will drop the dependence on $\psi$ of the quantities involved. For instance, when we refer to a simple self-financing strategy $(V_0,\phi)$, we implicitly refer to the triplet $(V_0,\phi,\psi)$ with $\psi\equiv\psi(\cdot,\cdot;\phi)$ defined in \eq{psi-sf}, and we denote by $V(t,\w;\phi)\equiv V(t,\w;\phi,\psi)$ its portfolio value.

\begin{remark}
The portfolio value $V(\cdot,\cdot;\phi)$ of a simple self-financing strategy $(V_0,\phi)$ is a real-valued $\FF$-adapted \cadlag\ process on $(\O,\F)$, satisfying
$$\Delta V(t,\w;\phi)=\phi(t,\w)\cdot\Delta\w(t),\quad \forall t\in[0,T],\w\in\O.$$
\end{remark}
The right-continuity of $V$ comes from the definition \eq{psi-sf}, which implies, for all $t\in[0,T]$ and $\w\in\O$,
$$\psi(t,\w)+\phi(t,\w)\cdot\w(t)=\psi(t+,\w)+\phi(t+,\w)\cdot\w(t).$$

Below, we establish the self-financing conditions for (non\hyp simple) trading strategies.

\begin{definition}\label{def:path-sf} 
  Given an $\F_0$-measurable random variable $V_0:\O\to\R$ and an $\R^d$-valued $\FF$-adapted \caglad\ process $\phi=(\phi(t,\cdot))_{t\in(0,T]}$ on $(\O,\F)$, we say that $(V_0,\phi)$ is a \emph{self-financing trading strategy on} $U\subset\O$ if there exists a sequence of self-financing simple trading strategies $\{(V_0,\phi^n,\psi^n), n\in\NN\}$, such that 
$$\forall\w\in U,\,\forall t\in[0,T],\quad\phi^n(t,\w)\limn\phi(t,\w),$$
and any of the following conditions is satisfied:
 \begin{enumerate}[(i)]
 \item there exists a real-valued $\FF$-adapted \cadlag\ process $G(\cdot,\cdot;\phi)$ on $(\O,\F)$ such that, for all $t\in[0,T],\w\in U$,
$$G(t,\w;\phi^n)\limn G(t,\w;\phi)\quad\text{and}\quad\Delta G(t,\w;\phi)=\phi(t,\w)\cdot\Delta\w(t);$$
 \item there exists a real-valued $\FF$-adapted \cadlag\ process $\psi(\cdot,\cdot;\phi)$ on $(\O,\F)$ such that, for all $t\in[0,T],\w\in U$,
$$\psi^n(t,\w;\phi^n)\limn\psi(t,\w;\phi)$$
and $$\psi(t+,\w;\phi)-\psi(t,\w;\phi)=-\w(t)\cdot\lf\phi(t+,\w)-\phi(t,\w)\rg;$$
 \item there exists a real-valued $\FF$-adapted \cadlag\ process $V(\cdot,\cdot;\phi)$ on $(\O,\F)$ such that, for all $t\in[0,T],\w\in U$,
$$V(t,\w;\phi^n)\limn V(t,\w;\phi)\quad\text{and}\quad\Delta V(t,\w;\phi)=\phi(t,\w)\cdot\Delta\w(t).$$
 \end{enumerate}
\end{definition}

\begin{remark}\label{rmk:path-sf}
It is easy to see that the three conditions (i)-(iii) of Definition \ref{def:path-sf} are equivalent. If any of them is fulfilled, the limiting processes $G,\psi,V$ define respectively the gain, bond holdings and portfolio value of the self-financing strategy $(V_0,\phi)$ on $U$ and they satisfy, for all $t\in[0,T],\w\in U$,
\begin{equation}
   \label{eq:sf}
V(t,\w;\phi)=V_0+G(t,\w;\phi)
 \end{equation}
and
\begin{equation}
  \label{eq:psi}
  \psi(t,\w;\phi)=V_0-\phi(0+,\w)-\Limn\sum_{i=1}^{m(n)}\w(t^n_i\wedge t)\cdot(\phi^n(t^n_{i+1}\wedge t,\w)-\phi^n(t^n_{i}\wedge t,\w)).
\end{equation}
\end{remark}
Equation \eq{sf} is the pathwise counterpart of the classical definition of self-financing in probabilistic financial market models. However, in our purely analytical framework, we couldn't take it directly as a self-financing condition because some prior assumptions are needed to define path-by-path the quantities involved.

\section{Pathwise construction of the gain process}
\label{sec:gain}

In the following two propositions we show that we can identify a special class of (pathwise) self-financing trading strategies, respectively on the set of continuous price paths with \fqv{\Pi} and on the set of \cadlag\ price paths with \fqv{\Pi}, whose gain is computable path-by-path as a limit of Riemann sums.

We define the following space of stock holdings strategies:
\begin{align}
\VV:=\big\{&\phi\ \R^d\text{-valued }\FF\text{-adapted \cadlag\ process}:\ \exists F\in\Cloc(\W_T)\cap\CC^{0,0}(\W_T), \nonumber\\
&\phi(t,\w)=\vd F(t,\w_{t-})\quad\forall\w\in Q(\O,\Pi),t\in[0,T]\big\}. \label{eq:nabla}
\end{align}
Note that $\VV$ has a natural structure of vector space; we call its elements \emph{vertical 1-forms}.

\begin{proposition}[Continuous price paths]\label{prop:G}
 Let $\phi=(\phi(t,\cdot))_{t\in(0,T]}\in\VV$, i.e.
  \begin{equation}
    \phi(t,\w)=\vd F(t,\w_{t-})\quad \forall\w\in Q(\O,\Pi),t\in[0,T],
 \end{equation}
where $F\in\Cloc(\W_T)\cap\CC^{0,0}(\W_T)$. Then, there exists a \cadlag\ process $G(\cdot,\cdot;\phi)$ such that, for all $\w\in Q(\O^0,\Pi)$ and $t\in[0,T]$,
 \begin{align}
   G(t,\w;\phi)={}&\int_0^t \phi(u,\w_u)\cdot\ud^{\Pi}\w \label{eq:fi}\\
   \label{eq:pathint}
   ={}&\lim_{n\rightarrow\infty}\sum_{t^n_i\leq t}\vd F(t_i^n,\w^{n}_{t^n_i-})\cdot(\w(t_{i+1}^n\wedge T)-\w(t_i^n\wedge T)),
  \end{align}
where $\w^n$ is defined as in \eq{wn}.
Moreover, $\phi$ is the stock holdings process of a self-financing trading strategy on $Q(\O^0,\Pi)$ with gain process $G(\cdot,\cdot;\phi)$.
\end{proposition}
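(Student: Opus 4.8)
The plan is to deduce the entire statement from the functional change of variable formula (\thm{fif-d}) together with the equivalent self-financing conditions of \defin{path-sf}. Concretely, I would exhibit an explicit sequence of simple self-financing strategies whose stock positions converge to $\phi$ pointwise on $Q(\O^0,\Pi)$ and whose gains — which are honest Riemann sums — converge to the \follmer\ integral in \eq{fi}. Once this is achieved, condition (i) of \defin{path-sf} is verified and the ``moreover'' assertion follows at once, since on continuous paths the jump identity $\De G(t,\w;\phi)=\phi(t,\w)\cdot\De\w(t)$ degenerates to $0=0$.

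For the approximating strategies I would set, with $\w^n$ the piecewise constant approximation \eq{wn},
\[
\phi^n(t,\w)=\sum_{i=0}^{m(n)-1}\vd F\big(t^n_i,\w^{n}_{t^n_i-}\big)\,\ind_{(t^n_i,t^n_{i+1}]}(t),\qquad \w\in\O.
\]
Each coefficient $\vd F(t^n_i,\w^n_{t^n_i-})$ reads $\w$ only at partition points up to $t^n_i$, hence is $\F_{t^n_i}$-measurable, so $\phi^n\in\Si(\pi^n)$ and $(V_0,\phi^n)$ is a simple self-financing strategy for any fixed $V_0$. To prove $\phi^n(t,\w)\to\phi(t,\w)$ I would fix $\w\in Q(\O^0,\Pi)$ and $t\in(0,T]$: since $\w$ is continuous, $\w^n\to\w$ uniformly and $\w_{t-}=\w_t$, and since $\Pi$ is dense and nested the left endpoint $t^n_k$, $k=k(t,n)$, of the $\pi^n$-interval containing $t$ increases to $t$ (with the obvious modification when $t\in\pi^n$). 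Thus $\dinf\big((t^n_{k},\w^n_{t^n_{k}-}),\,(t,\w_{t-})\big)\to0$, and the left-continuity of $\vd F$ — available because $F\in\Cloc(\W_T)$ is locally in $\Cb$, so $\vd F\in\CC^{0,0}_l(\L_T)$ — yields $\phi^n(t,\w)=\vd F(t^n_{k},\w^n_{t^n_{k}-})\to\vd F(t,\w_{t-})=\phi(t,\w)$.

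Next I would treat the gains. The gain $G(t,\w;\phi^n)$ is precisely the Riemann sum on the right of \eq{pathint} truncated at $t$. Applying \thm{fif-d} to $\w$ — equivalently to its stopping $\w_t\in Q(\O^0,\Pi)$ in order to localize at $t$ — shows that the full Riemann sum over $[0,T]$ converges to the \follmer\ integral; the sole discrepancy between that sum and $G(t,\w;\phi^n)$ is the last increment straddling $t$, namely $\vd F(t^n_k,\w^n_{t^n_k-})\cdot(\w(t^n_{k+1})-\w(t))$, which vanishes as $n\to\infty$ because $t^n_{k+1}\to t$, $\w$ is continuous, and $\vd F$ is locally bounded (boundedness-preserving). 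This identifies the limit with $\int_0^t\phi(u,\w_u)\cdot\ud^\Pi\w$ and establishes \eq{pathint}. Defining $G(t,\w;\phi)$ as this limit on $Q(\O^0,\Pi)$ (and, say, $0$ elsewhere) produces an $\FF$-adapted process, since each $G(t,\cdot;\phi^n)$ is $\F_t$-measurable and $Q(\O^0,\Pi)$ is measurable, and it is continuous in $t$ on $Q(\O^0,\Pi)$, hence \cadlag. With \eq{pathint} and the trivial jump relation in hand, \defin{path-sf}(i) and \rmk{path-sf} deliver the self-financing conclusion.

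The step I expect to be the real work is the gain computation: reconciling the ``capped at $t$'' Riemann sum coming from $G(\cdot;\phi^n)$ with the ``uncapped'' sum \eq{pathint} and with the limit object furnished by \thm{fif-d}, while simultaneously keeping the measurability and adaptedness bookkeeping honest. A secondary subtlety, which must be addressed before any of this, is that $\vd F$ is evaluated along the \emph{discontinuous} approximations $\w^n$, so one has to read $F\in\Cloc(\W_T)\cap\CC^{0,0}(\W_T)$ through the functional-calculus convention that renders $\vd F$ well defined on all of $\L_T$; the convergence of positions then rests squarely on the $\CC^{0,0}_l$-regularity supplied by $\Cloc$.
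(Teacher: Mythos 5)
Your proposal follows essentially the same route as the paper's own proof: the same piecewise-constant approximating simple strategies built from $\w^n$, pointwise convergence of the stock positions via the $\dinf$-left-continuity of $\vd F$, identification of the limiting gains with the F\"ollmer integral through the functional change of variable formula, and conclusion via condition (i) of \defin{path-sf}. If anything, you are slightly more explicit than the paper on two points it glosses over, namely the vanishing discrepancy between the gain capped at $t$ and the uncapped Riemann sum in \eq{pathint}, and the adaptedness and domain-of-definition bookkeeping for $\vd F$ evaluated along the discontinuous approximations $\w^n$.
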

\begin{proof}
  First of all, under the assumptions, the change of variable formula for functionals of continuous paths holds (\citep[Theorem 3]{contf2010}), which ensures the existence of the limit in \eq{pathint} and provide us with the definition of the F\"ollmer integral in \eq{fi}.
Then, we observe that each $n^{th}$ sum in the right-hand side of \eqref{eq:pathint} is exactly the accumulated gain of a pathwise self-financing strategy which trades only a finite number of times. Precisely, let us define, for all $\w\in\O$ and all $t\in[0,T)$, 
$$ \phi^n(t,\w):=\phi(0+,\w)\ind_{\{0\}}(t)+\sum_{i=0}^{m(n)-1}\phi\left(t^n_{i},\w^{n}\right)\ind_{(t^n_{i},t^n_{i+1}]}(t),$$
and $\psi^n\equiv\psi^n(\cdot,\cdot;\phi^n)$ defined according to \eq{psi-sf},
then $(\phi^n,\psi^n)$ are the sotck and bond holdings processes of a simple self-financing strategy, with cumulative gain $G(\cdot,\cdot;\phi^n)$ given by
\begin{align*} 
G(t,\w;\phi^n)={}&\sum_{i=1}^{k}\vd F\lf t^n_{i-1},\w^{n}_{t^n_{i-1}-}\rg\cdot(\w(t^n_{i})-\w(t^n_{i-1}))\\
&\,+\vd F\lf t^n_{k},\w^{n}_{t^n_{k}-}\rg\cdot(\w(t)-\w(t^n_{k})).
\end{align*}
and portfolio value $V(\cdot,\cdot;\phi^n)$ given by
$$V(t,\w;\phi^n)=\psi^n(t,\w)+\phi^n(t,\w)\cdot\w(t)=V_0+G(t,\w;\phi^n).$$
Then, we have to check whether the simple approximation $(\phi^n,\psi^n)$ satisfies the self-financing conditions for a trading strategy with stock holdings process $\phi$. 
First, $\phi^n$ converges pointwise to $\phi$ on $[0,T]\times Q(\O^0,\Pi)$, i.e.
$$|\phi^n(t,\w)-\phi(t,\w)|\limn0\quad \forall\w\in Q(\O^0,\Pi),\, t\in[0,T].$$
Indeed, for any $t\in[0,T],\w\in Q(\O^0,\Pi)$ and $\e>0$, there exist $\bar n\in\NN$ and $\eta>0$, such that, for all $n\geq\bar n$, 
\begin{align*}
\dinf\left((t^n_k,\w^{n}_{t_k^n-}),(t,\w)\right)={}&\max\left\{||\w^n_{t^n_k-},\w_{t^n_k-}||_\infty,\sup_{u\in[t^n_k,t)}|\w(t^n_k)-\w(u)|\right\}\\
&{}+|t-t_k^n|\\
<{}&\eta,
\end{align*}
where $k\equiv k(t,n):=\max\{i\in\{1,\ldots,m\}\;:\;t^n_i<t\}$, and then
\begin{align*}
|\phi^n(t,\w)-\phi(t,\w)|={}&\abs{\phi(t_k^n,\w_{t_k^n}^{n})-\phi(t,\w)}\\
={}&\abs{\vd F(t_k^n,\w_{t_k^n-}^{n})-\vd F(t,\w)}\\
\leq{}&\e,
\end{align*}
because by assumption $\vd F\in\CC_l^{0,0}(\W_T)$. 
We have thus built a sequence of self-financing simple trading strategies approximating $\phi$. 
Then, the convergence of the related gains on the set of continuous paths with finite quadratic variation along $\Pi$ is given by the definition of the \follmer\ integral in \citep[Theorem 3]{contf2010}: for all $t\in[0,T]$ and $\w\in Q(\O^0,\Pi)$,
$$G(t,\w;\phi^n)\limn G(t,\w;\phi),\quad G(t,\w;\phi)=\int_0^t\vd F(u,\w)\cdot\ud^{\Pi}\w.$$
Moreover, by the assumptions on $F$ and by \rmk{regularity}, the map $t\mapsto F(t,\w_{t})$ is continuous for all $\w\in C([0,T],\R^d)$. Therefore, by the change of variable formula for functionals of continuous paths, $G(\cdot,\w;\phi)$ is continuous for all $\w\in Q(\O^0,\Pi)$.
Thus, the process $G(\cdot,\cdot;\phi)$ satisfies the condition (i) in Definition \ref{def:path-sf} and so it is the well-defined gain process of any self-financing trading strategy with stock holdings process $\phi$, on $ Q(\O^0,\Pi)$.
\end{proof}

\begin{corollary}\label{cor:path-sf}
  Let $\phi$ be as in \prop{G} and $V_0:\O\to\R$ be any $\F_0$-measurable function, then the \caglad\ process $\psi(\cdot,\cdot;\phi)$, defined for all $t\in[0,T]$ and $\w\in Q(\O^0,\Pi)$ by
  \begin{align*}
  \psi(t,\w;\phi)={}&V_0-\phi(0+,\w)\\
&{}-\Limn\sum_{i=1}^{k(t,n)}\w(t^n_i)\cdot\lf\vd F\lf t^n_{i},\w^{n}_{t^n_{i}-}\rg-\vd F\lf t^n_{i-1},\w^{n}_{t^n_{i-1}-}\rg\rg,
  \end{align*}
is the bond holdings process of the self-financing trading strategy $(V_0,\phi)$ on $Q(\O^0,\Pi)$.
\end{corollary}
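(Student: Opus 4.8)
The plan is to piggy-back on \prop{G}, which already carries out all the analytic work, and to reduce the corollary to an identification of the limiting bond-holdings process supplied by the equivalence of the self-financing conditions. Concretely, the proof of \prop{G} produces an explicit sequence of simple self-financing strategies $(\phi^n,\psi^n)$ approximating $\phi$ on $Q(\O^0,\Pi)$ and verifies condition (i) of \defin{path-sf}: the gains $G(\cdot,\cdot;\phi^n)$ converge to a \cadlag\ process $G(\cdot,\cdot;\phi)$ with the correct jump structure. By \rmk{path-sf}, conditions (i)--(iii) are equivalent, so condition (ii) holds as well; hence the limit $\psi(\cdot,\cdot;\phi)=\Limn\psi^n(\cdot,\cdot;\phi^n)$ exists, is the bond-holdings process of $(V_0,\phi)$ on $Q(\O^0,\Pi)$, and is given by the closed form \eq{psi}.

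The only remaining task is to evaluate \eq{psi} for the particular approximating sequence built in \prop{G}, namely
$$\phi^n(t,\w)=\phi(0+,\w)\ind_{\{0\}}(t)+\sum_{i=0}^{m(n)-1}\vd F\lf t^n_i,\w^n_{t^n_i-}\rg\ind_{(t^n_i,t^n_{i+1}]}(t).$$
First I would read off the values of $\phi^n$ at the grid points: since $t^n_i$ lies in $(t^n_{i-1},t^n_i]$ and $t^n_{i+1}$ in $(t^n_i,t^n_{i+1}]$, one has $\phi^n(t^n_i,\w)=\vd F(t^n_{i-1},\w^n_{t^n_{i-1}-})$ and $\phi^n(t^n_{i+1},\w)=\vd F(t^n_i,\w^n_{t^n_i-})$, so each increment $\phi^n(t^n_{i+1},\w)-\phi^n(t^n_i,\w)$ equals $\vd F(t^n_i,\w^n_{t^n_i-})-\vd F(t^n_{i-1},\w^n_{t^n_{i-1}-})$. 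Substituting this into the sum in \eq{psi} and handling the truncation $\wedge t$ yields the claimed expression: the terms with $t^n_i\geq t$ vanish because both arguments collapse to $t$, while the single boundary index $i=k(t,n)$ (where $t^n_i<t\leq t^n_{i+1}$) still contributes $\w(t^n_i)\cdot(\vd F(t^n_i,\w^n_{t^n_i-})-\vd F(t^n_{i-1},\w^n_{t^n_{i-1}-}))$, since $t$ lies in the same interval $(t^n_i,t^n_{i+1}]$ on which $\phi^n$ is constant. The sum therefore collapses to $\sum_{i=1}^{k(t,n)}\w(t^n_i)\cdot(\vd F(t^n_i,\w^n_{t^n_i-})-\vd F(t^n_{i-1},\w^n_{t^n_{i-1}-}))$, matching the statement.

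The computation is routine, and what I expect to be the main (if mild) obstacle is bookkeeping rather than analysis. One must track carefully which interval's constant value $\phi^n$ takes at each partition point $t^n_i$, where the half-open convention $(t^n_i,t^n_{i+1}]$ is decisive, and must check that the $\wedge t$ truncation in \eq{psi} trims the upper end of the sum to exactly $k(t,n)$ without dropping or double-counting the boundary term. The existence of the limit and the regularity of $\psi$ need not be re-established here, as they come for free from the equivalence in \rmk{path-sf}; the corollary is in essence nothing but the explicit form of \eq{psi} specialized to the canonical approximating sequence of \prop{G}.
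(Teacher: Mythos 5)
Your argument is correct and is exactly the route the paper intends: the corollary is stated without proof precisely because it follows from \prop{G} via the equivalence of conditions (i)--(iii) in \rmk{path-sf}, with the displayed formula being \eq{psi} evaluated on the approximating sequence $\phi^n$ constructed in the proof of \prop{G}. Your bookkeeping of the grid values of $\phi^n$ on the half-open intervals and of the $\wedge\, t$ truncation down to $k(t,n)$ is accurate, so nothing is missing.
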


In order to get a version of \prop{G} on \cadlag\ paths, we impose slightly stricter conditions.

\begin{proposition}[C\`adl\`ag price paths]\label{prop:G-cadlag}
 Let  $\phi=(\phi(t,\cdot))_{t\in(0,T]}$ be an $\R^d$-valued $\FF$-adapted \caglad\ process on $(\O,\F)$ and assume that there exists a smooth \naf
 $$F\in\Cloc(\L_T)\cap\CC^{0,0}_r(\L_T),\qquad\vd F\in\CC^{0,0}(\L_T),$$ satisfying
  \begin{equation*}
    \phi(t,\w)=\vd F(t,\w_{t-})\quad\forall\w\in Q(\O,\Pi),\,t\in[0,T].
  \end{equation*}
Then, there exists a \cadlag\ process $G(\cdot,\cdot;\phi)$ such that, for all $\w\in Q(\O,\Pi)$ and $t\in[0,T]$,
 \begin{align}
   G(t,\w;\phi)={}&\int_0^t \phi(u,\w_u)\cdot\ud^{\Pi}\w  \label{eq:pathint-cadlag}\\
   ={}&\lim_{n\rightarrow\infty}\sum_{t^n_i\leq t}\vd F\lf t_i^n,\w^{n,\De\w(t_i^n)}_{t^n_i-}\rg\cdot(\w(t_{i+1}^n\wedge T)-\w(t_i^n\wedge T)),    \nonumber
  \end{align}
where $\w^n$ is defined as in \eq{wn}.
Moreover, $\phi$ is the stock position of a self-financing trading strategy on $Q(\O,\Pi)$ with gain process $G(\cdot,\cdot;\phi)$.
\end{proposition}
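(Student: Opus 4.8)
The plan is to follow the architecture of the proof of \prop{G}, replacing the continuous-path change of variable formula with the full functional change of variable formula \thm{fif-d} for \cadlag\ paths and using the stronger hypotheses $\vd F\in\CC^{0,0}(\L_T)$ and $F\in\CC^{0,0}_r(\L_T)$ to control the behaviour at jump times. First I would apply \thm{fif-d}: since $F\in\Cloc(\L_T)$ and $\w\in Q(\O,\Pi)$ (invoking, if needed, the reduction noted in the remark after \thm{fif-d} so that the jump condition \eq{ass_w} holds), the theorem yields the existence of the limit in \eq{pathint-cadlag} and thereby defines the \follmer\ integral $G(t,\w;\phi):=\int_0^t\phi(u,\w_u)\cdot\ud^{\Pi}\w$, together with the decomposition \eq{fif-d}.

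Next I would exhibit the approximating simple strategies. Setting
$$\phi^n(t,\w):=\phi(0+,\w)\ind_{\{0\}}(t)+\sum_{i=0}^{m(n)-1}\vd F\lf t^n_i,\w^{n,\De\w(t^n_i)}_{t^n_i-}\rg\ind_{(t^n_i,t^n_{i+1}]}(t),$$
and defining $\psi^n$ through \eq{psi-sf}, each pair $(\phi^n,\psi^n)$ is a simple self-financing strategy whose accumulated gain $G(\cdot,\cdot;\phi^n)$ is precisely the $n$-th Riemann sum in \eq{pathint-cadlag}; consequently $G(\cdot,\cdot;\phi^n)\limn G(\cdot,\cdot;\phi)$ pointwise on $Q(\O,\Pi)$ by the first step. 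I would then check that $\phi^n\to\phi$ pointwise on $[0,T]\times Q(\O,\Pi)$: writing $k\equiv k(t,n)$, this reduces to $\dinf\lf(t^n_k,\w^{n,\De\w(t^n_k)}_{t^n_k-}),(t,\w_{t-})\rg\limn0$, after which the joint continuity $\vd F\in\CC^{0,0}(\L_T)$ gives $\phi^n(t,\w)=\vd F(t^n_k,\w^{n,\De\w(t^n_k)}_{t^n_k-})\to\vd F(t,\w_{t-})=\phi(t,\w)$.

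Finally I would confirm that $G(\cdot,\cdot;\phi)$ is \cadlag\ and satisfies the jump relation of \defin{path-sf}(i). Right-continuity follows since $F\in\CC^{0,0}_r(\L_T)$ makes $t\mapsto F(t,\w_t)$ right-continuous by \rmk{regularity}(2), while the remaining terms in \eq{fif-d} are right-continuous in $t$; the identity $\De G(t,\w;\phi)=\vd F(t,\w_{t-})\cdot\De\w(t)=\phi(t,\w)\cdot\De\w(t)$ is read off from the limit \eq{pathint-cadlag}, since across a jump time captured by the partition the $n$-th partial sum increments by $\vd F(t,\w^{n,\De\w(t)}_{t-})\cdot\De\w(t)$. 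Together with the gain convergence, this verifies condition (i) of \defin{path-sf}, so $G(\cdot,\cdot;\phi)$ is the gain process of a self-financing strategy on $Q(\O,\Pi)$ with stock holdings $\phi$, as claimed.

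The main obstacle I anticipate is exactly the $\dinf$-convergence of the perturbed, stopped piecewise-constant approximations to $\w_{t-}$ at and near the jump times of $\w$, together with the matching verification of the jump identity for $G$. Unlike the continuous case of \prop{G}, here the perturbation by $\De\w(t^n_i)$ is built in precisely so that the sampled value at each $t^n_i$ carries the jump, and one must verify that as $t^n_k\uparrow t$ the stopped perturbed approximation still converges uniformly to $\w_{t-}$ even when $t$ is a discontinuity point or is approached through discontinuities. Controlling this uniform convergence---and thereby both the pointwise limit $\phi^n\to\phi$ and the jump relation for $G$---is where the \cadlag\ argument genuinely departs from the continuous one.
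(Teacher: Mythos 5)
Your proposal follows essentially the same route as the paper: approximate $\phi$ by simple strategies sampled at the partition points with the vertical perturbation $\w^{n,\De\w(t^n_i)}_{t^n_i-}$, invoke the \cadlag\ functional change of variable formula \thm{fif-d} to define the \follmer\ integral and obtain convergence of the gains, verify $\phi^n\to\phi$ pointwise from the regularity of $\vd F$, and conclude via condition (i) of \defin{path-sf}. (The paper writes the sampled value as $\phi(t^n_i+,\w^{n,\De\w(t^n_i)}_{t^n_i-})$ rather than $\vd F(t^n_i,\w^{n,\De\w(t^n_i)}_{t^n_i-})$, but these coincide under the stated continuity of $\vd F$, so this is cosmetic.)

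The one place where your argument is weaker than the paper's is the jump identity $\De G(t,\w;\phi)=\vd F(t,\w_{t-})\cdot\De\w(t)$. You propose to ``read it off'' from the Riemann sums, on the grounds that the $n$-th partial sum increments by $\vd F(t,\w^{n,\De\w(t)}_{t-})\cdot\De\w(t)$ across a jump time; but pointwise convergence of the partial sums at each fixed $t$ does not by itself transfer increments of the approximants to jumps of the limit --- that would require an interchange of $\lim_{n}$ with $\lim_{s\nearrow t}$ which you do not justify (and which you yourself flag as the anticipated obstacle). The paper sidesteps this entirely: it computes $\De G(t,\w;\phi)=\lim_{s\nearrow t}(G(t,\w;\phi)-G(s,\w;\phi))$ directly from the identity \eq{fif-d}, where the left jump of the \follmer\ integral equals the left jump of $F(t,\w_t)$ (which is $F(t,\w_t)-F(t,\w_{t-})$ by \rmk{regularity}) minus the corresponding term $F(t,\w_t)-F(t,\w_{t-})-\vd F(t,\w_{t-})\cdot\De\w(t)$ of the jump-correction sum, the horizontal and quadratic-variation integrals being continuous in $t$. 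Replacing your partial-sum argument by this computation closes the gap and completes the proof along the paper's lines.
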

\begin{proof}
The essence of the proof follows the lines of the proof of \prop{G}, using the change of variable formula for functionals of \cadlag\ paths, instead of continuous paths, which entails the definition of the \follmer\ integral \eq{pathint-cadlag}.
For all $\w\in\O$ and $t\in[0,T]$, we define
$$  \phi^n(t,\w):=\phi(0,\w)\ind_{\{0\}}(t)+\sum_{i=0}^{m(n)-1}\phi\left(t^n_{i}+,\w^{n,\De\w(t^n_{i})}_{t^n_i-}\right)\ind_{(t^n_{i},t^n_{i+1}]}(t) $$
and $\psi^n\equiv\psi^n(\cdot,\cdot;\phi^n)$ defined according to \eq{psi-sf},
then $(\phi^n,\psi^n)$ are the sotck and bond holdings processes of a simple self-financing strategy, with cumulative gain $G(\cdot,\cdot;\phi^n)$ given by
\begin{align*}
G^n(t,\w)={}&\sum_{i=1}^{k}\vd F\lf t^n_{i-1},\w^{n,\De\w(t^n_{i-1})}_{t^n_{i-1}-}\rg\cdot(\w(t^n_{i})-\w(t^n_{i-1}))\\
&\,+\vd F\lf t^n_{k},\w^{n,\De\w(t^n_{k})}_{t^n_{k}-}\rg\cdot(\w(t)-\w(t^n_{k})), 
\end{align*}
Then, we verify that
$$\forall\w\in Q(\O,\Pi),\,\forall t\in[0,T],\quad|\phi^n(t,\w)-\phi(t,\w)|\limn0.$$
This is true, by the left-continuity of $\vd F$: for each $t\in[0,T],\w\in Q(\O,\Pi)$ and $n\in\N$, we have that
$\forall \e>0$, $\exists \eta=\eta(\e)>0$, $\exists\bar n=\bar n(t,\eta)\in\NN$ such that, $\forall n\geq\bar n$,
\begin{align*}
\dinf\left(\w^{n,\De\w(t^n_k)}_{t_k^n-},\w_{t-}\right)={}&\max\left\{||\w^n_{t^n_k-},\w_{t^n_k-}||_\infty,\sup_{u\in[t^n_k,t)}|\w(t^n_k)-\w(u)|\right\}\\
&{}+|t-t_k^n|\\
<{}&\eta,
\end{align*}
hence
\begin{align*}
|\phi^n(t,\w)-\phi(t,\w)|={}&\abs{\lim_{s\searrow t^n_k}\phi(s,\w_{t_k^n-}^{n,\De\w(t^n_k)})-\phi(t,\w)}\\
={}&\lim_{s\searrow t^n_k}\abs{\vd F(s,\w_{t_k^n-}^{n,\De\w(t^n_k)})-\vd F(t,\w_{t-})}\\
\leq&\e.
\end{align*}
Therefore:
$$G(t,\w;\phi^n)\limn G(t,\w;\phi),\quad G(t,\w;\phi)=\int_{(0,t]}\vd F(u,\w_{u-})\cdot\ud^{\Pi}\w,$$
where $G(t,\w;\phi)$ is a  real-valued $\FF$-adapted process on $(\O,\F)$.
Moreover, by the change of variable formula \eq{fif-d} and \rmk{regularity}, it is \cadlag\ with left-side jumps given by
\begin{align*}
  \De G(t,\w;\phi)={}&\lim_{s\nearrow t}(G(t,\w;\phi)-G(s,\w;\phi))\\
  ={}& F(t,\w_{t})- F(t,\w_{t-})\\
  &{}-\lf F(t,\w_{t})- F(t,\w_{t-})-\vd F(t,\w_{t-})\cdot\De\w(t)\rg\\
  ={}&\vd F(t,\w_{t-})\cdot\De\w(t).
\end{align*}
Therefore,  the condition (i) in Definition \ref{def:path-sf} is satisfied.
\end{proof}

\begin{corollary}
  Let $\phi$ be as in \prop{G-cadlag} and $V_0:\O\to\R$ be any $\F_0$-measurable function, then the \caglad\ process $\psi(\cdot,\cdot;\phi)$, defined for all $t\in[0,T]$ and $\w\in Q(\O,\Pi)$ by
\begin{align*}
\psi(t,\w;\phi)={}&V_0-\phi(0+,\w)\\
&{}-\Limn\sum_{i=1}^{k(t,n)}\w(t^n_i)\cdot\lf\vd F_{t^n_{i}}\lf\w^{n,\De\w(t^n_{i})}_{t^n_{i}-}\rg-\vd F_{t^n_{i-1}}\lf\w^{n,\De\w(t^n_{i-1})}_{t^n_{i-1}-}\rg\rg
\end{align*}
is the bond holdings process of the self-financing trading strategy $(V_0,\phi)$ on $Q(\O,\Pi)$. 
\end{corollary}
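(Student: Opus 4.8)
The plan is to prove this corollary as a direct analogue of Corollary~\ref{cor:path-sf}, exploiting the equivalence of conditions (i)--(iii) in \defin{path-sf} established in \rmk{path-sf}. Since \prop{G-cadlag} already verifies that the approximating simple strategies $(\phi^n,\psi^n)$ satisfy condition (i) on $Q(\O,\Pi)$, the self-financing strategy $(V_0,\phi)$ is well-defined on that set, and by the equivalence it also admits a bond holdings process $\psi(\cdot,\cdot;\phi)$ satisfying condition (ii). Thus the only thing to do is to identify the explicit limiting form of $\psi$ by passing to the limit in the bond holdings $\psi^n$ of the simple strategies.

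First I would recall that for each simple strategy $(\phi^n,\psi^n)$, the bond holdings are given by \eq{psi-sf}, namely
$$\psi^n(t,\w;\phi^n)=V_0-\phi^n(0+,\w)\cdot\w(0)-\sum_{i=1}^{k(t,n)}\w(t^n_i)\cdot\lf\l^n_{i}(\w)-\l^n_{i-1}(\w)\rg,$$
where the increments $\l^n_i-\l^n_{i-1}$ are precisely $\vd F\lf t^n_{i},\w^{n,\De\w(t^n_{i})}_{t^n_{i}-}\rg-\vd F\lf t^n_{i-1},\w^{n,\De\w(t^n_{i-1})}_{t^n_{i-1}-}\rg$ by the definition of $\phi^n$ in the proof of \prop{G-cadlag}. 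Then I would invoke \eq{psi}, the general formula for the limiting bond process, which shows that the stated expression is exactly $\Limn\psi^n(t,\w;\phi^n)$, establishing convergence to $\psi(\cdot,\cdot;\phi)$ on $Q(\O,\Pi)$.

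Finally, to confirm that this limit is a genuine bond holdings process, I would check that it satisfies the self-financing identity of condition (ii), i.e.\ the jump relation $\psi(t+,\w;\phi)-\psi(t,\w;\phi)=-\w(t)\cdot\lf\phi(t+,\w)-\phi(t,\w)\rg$, and that $\psi$ is \caglad. Both follow from \rmk{path-sf}: since $G(\cdot,\cdot;\phi)$ constructed in \prop{G-cadlag} satisfies (i), the equivalence of the three conditions guarantees that the associated $\psi$ automatically satisfies (ii), and its regularity is inherited through the relation $\psi=V_0+G-\phi\cdot\w$ together with the \cadlag\ property of $G$ and the \caglad\ property of $\phi$.

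The main obstacle, as in the continuous case, is purely in justifying the interchange of the limit with the summation defining $\psi$, i.e.\ verifying that the partial sums converge on all of $Q(\O,\Pi)$; this is handled by the convergence $\phi^n\to\phi$ already proven via the left-continuity of $\vd F$, so no new analytic difficulty arises beyond what \prop{G-cadlag} supplies. The remainder is a routine transcription of Corollary~\ref{cor:path-sf} adapted to the \cadlag\ setting, carrying the vertical perturbation $\w^{n,\De\w(t^n_i)}_{t^n_i-}$ through the formulas in place of $\w^n_{t^n_i-}$.
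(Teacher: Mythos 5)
Your argument is correct and is exactly the route the paper intends: the corollary is stated without proof because it follows immediately from \prop{G-cadlag} (which verifies condition (i) of \defin{path-sf}), the equivalence of conditions (i)--(iii) in \rmk{path-sf}, and the substitution of the explicit increments $\l^n_i-\l^n_{i-1}=\vd F\lf t^n_{i},\w^{n,\De\w(t^n_{i})}_{t^n_{i}-}\rg-\vd F\lf t^n_{i-1},\w^{n,\De\w(t^n_{i-1})}_{t^n_{i-1}-}\rg$ into formula \eq{psi}. Your reconstruction supplies precisely these steps, so it matches the paper's (implicit) proof.
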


\section{Pathwise replication of contingent claims}
\label{sec:replication}

A non-probabilistic replication result restricted to the non-path-dependent case was obtained by \citet[Proposition 3]{bickwill}, even if lacking a formula for the hedging error in case of non replication.
Here, we state the generalization to the replication problem for path-dependent contingent claims, furthermore providing an explicit  formula for the hedging error on certain classes of price paths.


\begin{definition}\label{def:hedging_error}
  The \emph{hedging error} of a self-financing trading strategy $(V_0,\phi)$ on $U\subset D([0,T],\R^d_+)$ for a path-dependent $T$-contingent claim $H$ in a scenario $\w\in U$ is the value 
$$V(T,\w;\phi)-H(\w)=V_0(\w)+G(T,\w;\phi)-H(\w).$$
$(V_0,\phi)$ is said to \emph{replicate $H$ on $U$}  if its hedging error for $H$ is null on $U$, while it is called a \emph{super-strategy for $H$ on $U$} if its hedging error for $H$ is non-negative on $U$, i.e.
$$V_0(\w)+ G(T,\w;\phi)\geq H(\w_T)\quad\forall\w\in U.$$
\end{definition}

For any \cadlag\ function with values in $\S^+(d)$, say $A\in D([0,T],\S^+(d))$, we denote by
$$Q_A(\Pi):=\left\{\w\in Q(\O,\Pi):\;[\w](t)=\int_0^tA(s)\ud s\quad\forall t\in[0,T]\right\}$$
the set of price paths of finite quadratic variation along $\Pi$, whose quadratic variation is absolutely continuous with density $A$. Note that the elements of $Q_A(\Pi)$ are continuous, by \eq{qv-jumps}.
\begin{proposition}\label{prop:hedge}
Consider a path-dependent contingent claim with exercise date $T$ and a continuous payoff functional $H:(\O,\norm{\cdot}_\infty)\mapsto\R$. Assume that there exists a smooth \naf\ $F\in\Cloc(\W_T)\cap\CC^{0,0}(\W_T)$ that satisfies
\beq\label{eq:fpde1}
\left\{\bea{ll}
\hd F(t,\w_t)+\frac12\tr\lf A(t)\vd^2F(t,\w_t)\rg=0,& t\in[0,T),\w\in Q_A(\Pi)\\
F(T,\w)=H(\w).&
\ea\right.\eeq
Let  $\tilde A\in D([0,T],\S^+(d))$. Then, the hedging error of the trading strategy $(F(0,\cdot),\vd F)$, self-financing on $Q(\O^0,\Pi)$, for $H$ in any price scenario $\w\in Q_{\tilde A}(\Pi)$ is 
\begin{equation}\label{eq:err}
  \frac12\int_{0}^T\tr\lf (A(t)-\tilde A(t))\vd^2F(t,\w_t)\rg \ud t.
\end{equation}
In particular, the trading strategy $(F(0,\cdot),\vd F)$ replicates the contingent claim $H$ on $Q_A(\Pi)$, and its portfolio value at any time $t\in[0,T]$ in any scenario $\w\in Q_A(\Pi)$ is given by $F(t,w_t)$.
\end{proposition}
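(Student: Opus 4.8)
The plan is to read off the gain from trade from \prop{G}, re-express it through the functional change of variable formula \eq{fif-d}, and then substitute the functional PDE \eq{fpde1}; the only genuinely delicate point is that \eq{fpde1} is assumed along $Q_A(\Pi)$ whereas the hedging error is to be evaluated along $Q_{\tilde A}(\Pi)$. The hypotheses on $F$ are exactly those of \prop{G}, so taking $\phi=\vd F$ the pair $(F(0,\cdot),\vd F)$ is self-financing on $Q(\O^0,\Pi)$ with gain $G(t,\w;\vd F)=\int_0^t\vd F(u,\w_u)\cdot\ud^\Pi\w$. Every $\w\in Q_{\tilde A}(\Pi)$ is continuous (by \eq{qv-jumps}) and of \fqv{\Pi}, hence $Q_{\tilde A}(\Pi)\subset Q(\O^0,\Pi)$, and for such $\w$ the hedging error of \defin{hedging_error} equals $F(0,\w_0)+\int_0^T\vd F(t,\w_t)\cdot\ud^\Pi\w-H(\w)$.

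Next I would apply \thm{fif-d} to $F$ along a fixed $\w\in Q_{\tilde A}(\Pi)$. Since $\w$ is continuous we have $\w_{t-}=\w_t$, every jump term in \eq{fif-d} vanishes, and $\ud[\w]^c(t)=\tilde A(t)\,\ud t$ by definition of $Q_{\tilde A}(\Pi)$; solving \eq{fif-d} for the \follmer\ integral gives $\int_0^T\vd F(t,\w_t)\cdot\ud^\Pi\w=F(T,\w_T)-F(0,\w_0)-\int_0^T\hd F(t,\w_t)\,\ud t-\frac12\int_0^T\tr(\tilde A(t)\vd^2F(t,\w_t))\,\ud t$. Substituting into the hedging error and using the terminal condition $F(T,\w_T)=F(T,\w)=H(\w)$, the error collapses to $-\int_0^T\hd F(t,\w_t)\,\ud t-\frac12\int_0^T\tr(\tilde A(t)\vd^2F(t,\w_t))\,\ud t$.

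The remaining step, which I expect to be the main obstacle, is to replace $-\hd F(t,\w_t)$ by $\frac12\tr(A(t)\vd^2F(t,\w_t))$: this is precisely \eq{fpde1}, but \eq{fpde1} is granted only for $\w\in Q_A(\Pi)$, while here $\w\in Q_{\tilde A}(\Pi)$, and the functional derivatives $\hd F(t,\w_t),\vd^2F(t,\w_t)$ do depend on the whole stopped path $\w_t$ and hence a priori on its quadratic-variation profile. The route I would take is to view $u(t,\cdot):=\hd F(t,\cdot)+\frac12\tr(A(t)\vd^2F(t,\cdot))$ as a functional that, at each fixed $t$, is continuous on $(\W_T,\dinf)$ — at fixed $t$ this reduces to sup-norm continuity in the path argument, which follows from $\hd F$ being continuous at fixed times and from $\vd^2F\in\CC^{0,0}_l(\W_T)$. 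By \eq{fpde1}, $u(t,\cdot)$ vanishes on the set of stopped paths $\{\w_t:\w\in Q_A(\Pi)\}$; the lemma I would need is that this set is sup-norm dense among continuous stopped paths, whence $u(t,\cdot)\equiv0$ on $\W_T$ and in particular along $Q_{\tilde A}(\Pi)$. Density is believable because any continuous path can be perturbed, uniformly by an arbitrarily small amount, by fine $\Pi$-adapted oscillations realizing the prescribed quadratic variation $A$ along $\Pi$; making this precise, while respecting the localization built into $\Cloc(\W_T)$, is where the real work sits. (If instead one reads \eq{fpde1} as holding at every continuous stopped path, this step is immediate.) Granting it, substitution turns the error into $\frac12\int_0^T\tr((A(t)-\tilde A(t))\vd^2F(t,\w_t))\,\ud t$, which is \eq{err}.

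Finally, the two ``in particular'' claims need no such transfer. Taking $\tilde A=A$ annihilates the integrand of \eq{err}, giving replication on $Q_A(\Pi)$; and rerunning the computation with an arbitrary terminal time $t\in[0,T]$ in place of $T$ directly along $\w\in Q_A(\Pi)$ — where \eq{fpde1} applies verbatim and the horizontal and quadratic terms of \eq{fif-d} cancel on $[0,t]$ — yields $F(t,\w_t)=F(0,\w_0)+\int_0^t\vd F(u,\w_u)\cdot\ud^\Pi\w=V(t,\w;\vd F)$, identifying the portfolio value with $F(t,\w_t)$ at every time and, at $t=T$, with $H(\w)$.
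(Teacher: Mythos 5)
Your argument reproduces the paper's proof essentially verbatim: \prop{G} identifies the gain with the F\"ollmer integral $\int_0^T\vd F(t,\w_t)\cdot\ud^\Pi\w$, the functional change of variable formula along a continuous path $\w\in Q_{\tilde A}(\Pi)$ (where the jump terms vanish and $\ud[\w](t)=\tilde A(t)\,\ud t$) rewrites $V(T,\w)$ as $F(T,\w_T)-\int_0^T\hd F(t,\w_t)\,\ud t-\frac12\int_0^T\tr(\tilde A(t)\vd^2F(t,\w_t))\,\ud t$, and the terminal condition together with \eq{fpde1} gives \eq{err}; the two ``in particular'' claims follow exactly as you say. The one point where you depart from the paper is the step you single out as the main obstacle: the paper performs no transfer argument at all, but simply invokes \eq{fpde1} pointwise at $(t,\w_t)$ for $\w\in Q_{\tilde A}(\Pi)$ --- that is, it reads the PDE in the way you describe parenthetically as making the step immediate. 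So your density lemma (that $\{\w_t:\w\in Q_A(\Pi)\}$ is sup-norm dense among continuous stopped paths, combined with fixed-time continuity of $\hd F+\frac12\tr(A\,\vd^2F)$) is not part of the paper's proof; it is a reasonable, if unproven and genuinely nontrivial, way to close a gap that, on the literal reading of hypothesis \eq{fpde1}, is present in the paper's own argument as well. If you take the hypothesis at face value as holding only along $Q_A(\Pi)$, you have correctly located the weak point; if you read it as the paper implicitly does, your proof is complete and identical to the paper's.
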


\begin{proof}[Proof]
By \prop{G}, the gain at time $t\in[0,T]$ of the trading strategy $(F(0,\cdot),\vd F)$ in a price scenario $\w\in Q(\O^0,\Pi)$ is given by 
$$G(t,\w;\vd F)=\int_0^t\vd F(u,\w_u)\cdot\ud^{\Pi}\w(u).$$
Moreover, this strategy is self-financing on $Q(\O^0,\Pi)$, hence, by Remark \ref{rmk:path-sf}, its portfolio value at any time $t\in[0,T]$ in any scenario $\w\in Q(\O^0,\Pi)$ is given by
$$V(t,\w)=F(0,\w_{0})+\int_0^t\vd F(u,\w_u)\cdot\ud^{\Pi}\w(u).$$ 
In particular, since $F$ is smooth, we can apply the change of variable formula for functionals of continuous paths. By using the functional partial differential equation \eqref{eq:fpde1}, for all $\w\in Q_{\tilde A}(\Pi)$, this gives 
\begin{align*}
V(T,\w)={}&F(0,\w_{0})+\int_{0}^T\vd F(t,\w_t)\cdot\ud^{\Pi}\w(t) \\
 ={}&F(T,\w_T)-\int_{0}^T\hd F(t,\w_t)\ud t-\frac12\int_{0}^T\tr\lf \tilde A(t)\vd^2F(t,\w_t)\rg \ud t\\
={}&H-\frac12\int_{0}^T\tr\lf(\tilde A(t)-A(t))\vd^2F(t,\w_t)\rg \ud t.
\end{align*}
\end{proof}

\section{A plausibility requirement}
\label{sec:plausible}

We want now to discuss the reasonability of the ever-present assumption of finte quadratic variation on price paths.

The results on typical price paths reviewed in the Introduction cannot be directly applied to our framework, because we want to work with a fixed sequence of time partitions rather than with a random one.
Nonetheless, we can deduce that if we consider a singleton $\{\w\}$, where $\w\in\O_\psi$, with $\O_\psi$ defined in \eq{mod-jumps}, and our sequence $\Pi$ of partitions is of dyadic type for $\w$, then the property of finite quadratic variation for $\w$ is necessary to prevent the existence of a positive capital process (\defin{upperP}) trading at times in $\Pi$, that starts from a finite initial capital but ends up with infinite capital at time $T$.
However, the conditions imposed on the sequence of partitions are difficult to check.

Instead, we turn around the perspective: we keep our deterministic sequence $\Pi$ of partitions fixed and try to identify the right subset of paths in $\O$ that is \emph{plausible} working with. 
To do so, we propose the following notion of \emph{plausibility} that, together with a technical condition on the paths, suggests that it is indeed a good choice to work on set of price paths with finite quadratic variation along $\Pi$.
\begin{definition}
  A set of paths $U\subset\O$ is called \emph{plausible} if there does not exist a sequence $(V_0^n,\phi^n)$ of simple self-financing strategies such that:
  \begin{enumerate}[(i)]
  \item the correspondent sequence of portfolio values, $\{V(t,\w;\phi^n)\}_{n\geq1}$, is non-decreasing for all paths $\w\in U$ at any time $t\in[0,T]$,
  \item the correspondent sequence of initial investments $\{V^n_0(\w_0)\}_{n\geq1}$ converges for all paths $\w\in U$,
  \item the correspondent sequence of gains along some path $\w\in U$ at the final time $T$ grows to infinity with $n$, i.e. $G(T,\w;\phi^n)\limn\infty$.
  \end{enumerate}
\end{definition}

\begin{proposition}
Let $U\subset\O$ be a set of price paths satisfying, for all $(t,\w)\in[0,T]\times U$ and all $n\in\NN$, 
\beq\label{eq:cn-conv}
\sum_{n=1}^\infty\!\lf\!\sum_{i=0}^{m(n-1)-1}\!\!\!\!\!\!\sum_{\stackrel{j,k:\,j\neq k,}{t^{n-1}_i\leq t_j^n,t^n_k<t^{n-1}_{i+1}}}\!\!\!\!\!\!(\w(t^n_{j+1}\wedge t)-\w(t^n_j\wedge t))\cdot(\w(t^n_{k+1}\wedge t)-\w(t^n_k\wedge t))\!\rg^-
\eeq
is finite, where $(x)^-:=\max\{0,-x\}$ denotes the negative part of $x\in\R$. 
Then, if $U$ is plausible, all paths $\w\in U$ have \fqv{\Pi}.
\end{proposition}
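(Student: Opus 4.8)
The plan is to argue by contraposition: I assume some path $\w^*\in U$ fails to have \fqv{\Pi} and produce a sequence of simple self-financing strategies witnessing that $U$ is not plausible. For $\w\in U$ and $t\in[0,T]$ write $Q_n(t,\w):=\sum_{t^n_i\leq t}\abs{\w(t^n_{i+1}\wedge t)-\w(t^n_i\wedge t)}^2\geq0$. The algebraic engine is a nesting identity: since $\Pi$ is nested, each cell $[t^{n-1}_i,t^{n-1}_{i+1})$ of $\pi^{n-1}$ is subdivided by points of $\pi^n$, so expanding $\abs{\w(t^{n-1}_{i+1})-\w(t^{n-1}_i)}^2$ as the square of the sum of the finer increments and summing over $i$ yields $Q_{n-1}(t,\w)=Q_n(t,\w)+C_n(t,\w)$, where $C_n(t,\w)$ is exactly the cross-term double sum in the brackets of \eq{cn-conv}. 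Hence $Q_n-Q_{n-1}=-C_n$ and $Q_N(t,\w)=Q_1(t,\w)-\sum_{m=2}^N C_m(t,\w)$.

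Next I would realize these quantities as trading gains, which is what connects them to plausibility. By discrete integration by parts, the $\Si(\pi^n)$-strategy holding $-2\w(t^n_i)$ on $(t^n_i,t^n_{i+1}]$ is simple self-financing (in the sense of \Sec{self-fin}) with gain $G(t,\w)=Q_n(t,\w)-\abs{\w(t)}^2+\abs{\w(0)}^2$; equivalently, each cross-term $C_n$ is the gain of the simple strategy that holds, inside each coarse cell, twice the price increment accumulated since the start of that cell. Thus the whole telescoping sequence $\{Q_N\}$ lives inside the admissible class over which plausibility is tested.

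Then comes the plausibility argument. Hypothesis \eq{cn-conv} says $\sum_n (C_n(t,\w))^-<\infty$; combined with $Q_N\geq0$, which forces $\sum_{m\leq N}(C_m)^+\leq Q_1+\sum_{m\leq N}(C_m)^-$, this gives $\sum_n\abs{C_n}<\infty$, so $\sum_n C_n$ converges absolutely. Using this summability I would correct the initial capitals by a convergent series so that the portfolio values of the strategies above become non-decreasing in $n$ for every $\w\in U$ (condition (i)) while their initial investments converge (condition (ii)). If $Q_n(T,\w^*)$ were unbounded, the corresponding gains would grow to infinity (condition (iii)), contradicting plausibility; running the same argument at each $t$ shows $Q_n(t,\w)$ converges to a finite limit for all $t$ and all $\w\in U$. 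Together with the jump decomposition \eqref{eq:qv-jumps} this yields \fqv{\Pi}, the $\R^d$-valued case following by applying the scalar statement to each $\w^i$ and each $\w^i+\w^j$.

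The step I expect to be the main obstacle is arranging conditions (i) and (ii) uniformly over all of $U$ while preserving the blow-up (iii) along $\w^*$. The tension is one of signs: \eq{cn-conv} controls only the negative parts $(C_n)^-$, and only pointwise in $\w$, whereas making the portfolio value of the $+Q_n$ (i.e.\ blow-up) strategy monotone requires taming the opposite sign; recovering summability of $(C_n)^+$ from $Q_n\geq0$ as above is what closes this gap, but one must still check that the initial-capital corrections can be chosen uniformly in $\w\in U$. A secondary point is verifying the \cadlag\ regularity and the continuous/jump split of the limiting quadratic variation required by \defin{qv1}.
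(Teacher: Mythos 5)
Your construction is the same as the paper's: the simple strategies holding $-2\w(t^n_i)$ on $(t^n_i,t^n_{i+1}]$, the telescoping identity $Q_{n-1}=Q_n+C_n$, and a final appeal to plausibility. The trouble is your pivotal step, namely deducing $\sum_n\abs{C_n}<\infty$ from \eq{cn-conv} together with $Q_N\geq0$. The algebra is right ($Q_N=Q_1-\sum_{m=2}^NC_m\geq0$ bounds the partial sums of $C_m$ above by $Q_1$, hence $\sum(C_m)^+\leq Q_1+\sum(C_m)^-$), but it proves far too much: it makes $Q_N(t,\w)$ converge for \emph{every} path satisfying \eq{cn-conv}, with no reference to plausibility at all. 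Your subsequent contradiction argument is then vacuous --- once $\sum_nC_n$ converges absolutely, $Q_n(T,\w^*)$ cannot be unbounded, so your condition (iii) never fires and the plausibility hypothesis does no work. This is the signal that a sign has gone astray. Writing $A^n$ for the paper's approximations, one has $C_n=Q_{n-1}-Q_n=-(A^n-A^{n-1})$, so $(C_n)^-=(A^n-A^{n-1})^+$; but the quantity the paper's proof actually needs to be summable is $(A^n-A^{n-1})^-=(C_n)^+$, because the initial-capital corrections must satisfy $k_n\geq\max_t(A^n(t,\w)-A^{n-1}(t,\w))^-$ to make the portfolio values $V(t,\w;\phi^n)=A^n(t,\w)+c_n$ non-decreasing in $n$, and $\sum_nk_n$ must converge for condition (ii). Under that (intended) hypothesis your inequality delivers nothing: summability of $(C_n)^+$ plus $Q_N\geq0$ only gives a lower bound on $\sum(C_n)^-$, the sequence $A^N$ is then ``almost increasing'' and may still diverge to $+\infty$, and it is precisely plausibility --- monotone portfolio values, convergent initial capitals, hence no exploding gains --- that excludes this.

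In short: either you read \eq{cn-conv} literally, in which case your absolute-convergence shortcut is valid but the proposition collapses to a statement in which plausibility and the entire trading construction are redundant, and your own later use of plausibility is inconsistent with your earlier step; or you read it as the paper's proof requires (summable \emph{positive} parts of the cross terms, equivalently summable decreases of $A^n$), in which case your key inequality fails and you are missing the actual mechanism: choose $V_0^n-V_0^{n-1}=k_n\geq\max_t(A^n-A^{n-1})^-$, observe $\sum_nk_n<\infty$, conclude that $V(t,\w;\phi^n)$ is non-decreasing and hence has a limit in $[0,\infty]$, and invoke plausibility to rule out $G(T,\w;\phi^n)\to\infty$, which is equivalent to divergence of $A^n$. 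Your secondary concerns (uniformity of the capital corrections over $U$, the decomposition \eq{qv-jumps}, the multidimensional reduction) are legitimate but are also left implicit in the paper's own argument.
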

Before proceding with the proof, let us translate the condition \eq{cn-conv} in terms of the relation between the price path $\w$ and the sequence of nested partitions $\Pi$.
Let $d=1$ for sake of notation. Denote by $A^n$ the $n^{th}$-approximation of the quadratic variation along $\Pi$, i.e.
$$A^n(t,\w):=\sum_{i=0}^{m(n)-1}(\w(t^n_{i+1}\wedge t)-\w(t^n_i\wedge t))^2\quad\forall(t,\w)\in[0,T]\times\O.$$
Then:
\begin{align*}
  &A^n(t,\w)-A^{n-1}(t,\w)=\\
={}&\sum_{i=0}^{m(n)-1}(\w(t^n_{i+1}\wedge t)-\w(t^n_i\wedge t))^2-\sum_{i=0}^{m(n-1)-1}(\w(t^{n-1}_{i+1}\wedge t)-\w(t^{n-1}_i\wedge t))^2\\
={}&\sum_{i=0}^{m(n-1)-1}\!\lf\sum_{t^{n-1}_i\leq t_j^n< t^{n-1}_{i+1}}\!\!(\w(t^n_{j+1}\wedge t)-\w(t^n_j\wedge t))^2-(\w(t^{n-1}_{i+1}\wedge t)-\w(t^{n-1}_i\wedge t))^2\rg\\
={}&{}-2\sum_{i=0}^{m(n-1)-1}\!\sum_{\stackrel{j,k:\,j\neq k,}{t^{n-1}_i\leq t_j^n,t^n_k<t^{n-1}_{i+1}}}\!(\w(t^n_{j+1}\wedge t)-\w(t^n_j\wedge t))(\w(t^n_{k+1}\wedge t)-\w(t^n_k\wedge t)).
\end{align*}
Thus the series in \eq{cn-conv} is exactly the series $\sum_{n=1}^\infty (A^n(t,\w)-A^{n-1}(t,\w))^-$.

\proof
For $n\in\NN$, let us define a simple predictable process $\phi^n\in\Si(\pi^n)$ by
  \begin{align}
    \label{eq:Vn}
    \phi^n(t,\w):={}&{}-2\sum_{i=0}^{m(n)-1}\w(t^n_i)\ind_{(t^n_i,t^n_{i+1}]}(t)
  \end{align}
Then, we can rewrite the $n^{\mathrm{th}}$ approximation of the quadratic variation of $\w$ at time $t\in[0,T]$ as 
\begin{align}
  A^n(t,\w)={}&\w(t)^2-\w(0)^2-2\sum_{i=0}^{m(n)-1}\w(t^n_i)(\w(t^n_{i+1}\wedge t)-\w(t^n_i\wedge t))\nonumber\\
={}&\w(t)^2-\w(0)^2+G(t,\w;\phi^n)\nonumber\\
={}&V(t,\w;\phi^n)-c_n,  \label{eq:An}
\end{align}
where $c_n=\w(0)^2-\w(t)^2+V^n_0(\w_0)$.
We want to define the initial capitals $V^n_0$ in such a way that the sequence of simple self-financing strategies $(V_0^n,\phi^n)$ has non decreasing portfolio values at any time and the sequence of initial capitals converges. 
By writing
\beq\label{eq:kn}
A^n(t,\w)-A^{n-1}(t,\w)+k_n=V(t,\w;\phi^n)-V(t,\w;\phi^{n-1}),
\eeq
where $k_n=c_n-c_{n-1}=V^{n}_0(\w_0)-V^{n-1}_0(\w_0)$, we see that the monotonicity of $\{V(t,\w;\phi^n)\}_{n\in\NN}$ is obtained by opportunely choosing a finite $k_n\geq0$ (i.e. by choosing $V^n_0$), which is made possible by the boundedness of $\abs{A^n(t,\w)-A^{n-1}(t,\w)}$, implied by condition \eq{cn-conv}.
However, it is not sufficient to have $k_n<\infty$ for all $n\in\NN$, but we need the convergence of the series $\sum_{n=1}^\infty k_n$.
This is provided again by condition \eq{cn-conv}, because the minimum value of $k_n$ satisfying the positivity of \eq{kn} for all $t\in[0,T]$ is indeed $\max_{t\in[0,T]}(A^n(t,\w)-A^{n-1}(t,\w))^-$.
On the other hand, since both the sequence $\{V(t,\w;\phi^n)\}_{n\geq1}$ for any $t\in[0,T]$ and the sequence $\{V^n_0\}_{n\geq1}$ are regular, i.e. they have limit for $n$ going to infinity, by \eq{An} the sequence $\{A^n(t,\w)\}_{n\geq1}$ is also regular. Finally, since the sequence of initial capitals converges, the equation \eq{An} implies that the sequence of approximations of the quadratic variation of $\w$ converges if and only if $\{G(T,\w;\phi^n)\}_{n\geq1}$ converges.
But $U$ is a plausible set by assumption, thus convergence must hold.
\endproof

\appendix
\section{Comparison of pathwise notions of quadratic variation}

An important distinguish has to be done between \defin{qv1} and the notions of $2$-variation and local 2-variation discussed in the Introduction and on which the theory of extended Riemann-Stieltjes integrals is based (see e.g. \citet[Chapters 1,2]{dudley-norvaisa} and \citet[Section 1]{norvaisa}). 
Let $f$ be any real-valued function on $[0,T]$ and $0<p<\infty$, the \emph{$p$-variation} of $f$ is defined as 
\beq\label{eq:p-var}
v_p(f):=\sup_{\k\in P[0,T]}s_p(f;\k)
\eeq
where $P[0,T]$ is the set of all partitions of $[0,T]$ and
$$s_p(f;\k)=\sum_{i=1}^n\abs{f(t_i)-f(t_{i-1})}^p,\quad\text{for }\k=\{t_i\}_{i=0}^n\in P[0,T].$$
The set of functions with finite $p$-variation is denoted by $\W_p$. We also denote by $\mathrm{vi}(f)$ the variation index of $f$, that is the unique number in $[0,\infty]$ such that
$$\bea{l}v_p(f)<\infty,\quad\mbox{for all }p>\mathrm{vi}(f),\\v_p(f)=\infty,\quad\mbox{for all }p<\mathrm{vi}(f)\ea.$$

For $1<p<\infty$, $f$ has the \emph{local $p$-variation} if the directed function $(s_p(f;\cdot),\mathfrak R)$, where $\mathfrak R:=\{\mathcal R(\k)=\left\{\pi\in P[0,T],\,\k\subset\pi\},\,\k\in P[0,T]\right\}$, converges.
An equivalent characterization of functions with local $p$-variation was introduced by \citet{love-young} and it is given by the Wiener class $\W^*_p$ of functions $f\in\W_p$ such that 
$$\limsup_{\k,\mathfrak R}s_p(f;\k)=\sum_{(0,T]}\abs{\De^-f}^p+\sum_{[0,T)}\abs{\De^+f}^p,$$
where the two sums converge unconditionally. 
We refer to \cite[Appendix A]{norvaisa} for convergence of directed functions and unconditionally convergent sums.
The Wiener class satisfies
$\cup_{1\leq q<p}\W_q\subset\W_p^*\subset \W_p$.

A theory on Stieltjes integrability for functions of bounded $p$-variation was developed by \citet{young36,young38} in the thirties and generalized among others by \cite{dudley-norvaisa99,norvaisa02} around the years 2000. 
According to Young's most well known theorem on Stieltjes integrability, if 
\beq\label{eq:ys}
f\in\W_p,\;g\in\W_q,\quad p^{-1}+q^{-1}>1,\,p,q>0,
\eeq
then the integral $\int_0^Tf\ud g$ exists: in the \emph{Riemann-Stieltjes} sense if $f,g$ have no common discontinuities, in the \emph{refinement Riemann-Stieltjes} sense if $f,g$ have no common discontinuities on the same side, and always in the \emph{Central Young} sense. \cite{dudley-norvaisa99} showed that under condition \eq{ys} also the \emph{refinement Young-Stieltjes} integral always exists.
However, in the applications, we often deal with paths of unbounded 2-variation, like sample paths of the Brownian motion. For example, given a Brownian motion $B$ on a complete probability space $(\O,\F,\PP)$, the pathwise integral $(RS)\!\int_0^Tf\ud B(\cdot,\w)$ is defined in the Riemann-Stieltjes sense, for $\PP$-almost all $\w\in\O$, for any function having bounded $p$-variation for some $p<2$, which does not apply to sample paths of $B$.
In particular, in Mathematical Finance, one necessarily deals with price paths having unbounded 2-variation. In the special case of a market with continuous price paths, \cite{vovk-proba} proved that non-constant price paths must have a variation index equal to 2 and infinite 2-variation in order to rule out \lq arbitrage opportunities of the first kind\rq. 
In the special case where the integrand $f$ is replaced by a smooth function of the integrator $g$, weaker conditions than \eq{ys} on the $p$-variation are sufficient (see \cite{norvaisa02} or the survey in \cite[Chapter 2.4]{norvaisa}) to obtain chain rules and integration-by-parts formulas for extended Riemann-Stieltjes integrals, like the refinement Young-Stieltjes integral, the symmetric Young-Stieltjes integral, the Central Young integral, the Left and Right Young integrals, and others.
However, these conditions are still quite restrictive.

 As a consequence, other notions of quadratic variation were formulated and integration theories for them followed.

\subsubsection*{\follmer's quadratic variation and pathwise calculus}
In 1981, \citet{follmer} derived a pathwise version of the \ito\ formula, conceiving a construction path-by-path of the stochastic integral of a special class of functions. His purely analytic approach does not ask for any probabilistic structure, which may instead come into play only in a later moment by considering stochastic processes that satisfy almost surely, i.e. for almost all paths, a certain condition.
F\"ollmer considers functions on the half line $[0,\infty)$, but we present here his definitions and results adapted to the finite horizon time $[0,T]$.
His notion of quadratic variation is given in terms of weak convergence of measures and is renamed here in his name in order to make the distinguish between the different definitions.
\begin{definition}\label{def:qv-follmer}
Given a dense sequence $\Pi=\{\pi_n\}_{n\geq1}$ of partitions of $[0,T]$, for $n\geq1\; \pi_n=(t_i^n)_{i=0,\ldots,m(n)}$, $0=t_0^n<\ldots<t_{m(n)}^n<\infty$, a \cadlag\ function $x:[0,T]\to\R$ is said to have \emph{F\"ollmer's quadratic variation along} $\Pi$ if the Borel measures 
\beq\label{eq:xin}
\xi_n:=\sum\limits_{i=0}^{m(n)-1}\incrx{x}^2\d_{t_i^n},
\eeq
where $\d_{t_i^n}$ is the Dirac measure centered in $t_i^n$, converge weakly to a finite measure $\xi$ on $[0,T]$ with cumulative function $[x]$ and Lebesgue decomposition 
\beq\label{eq:dec-follmer}
[x](t)=[x]^c(t)+\sum_{0<s\leq t}\De x^2(s),\quad \forall t\in[0,T]
\eeq
where $[x]^c$ is the continuous part.
\end{definition}

\begin{proposition}[Follmer's pathwise \ito\ formula]
  Let $x:[0,T]\to\R$ be a \cadlag\ function having F\"ollmer's quadratic variation along $\Pi$.
  Then, for all $t\in[0,T]$, a function $f\in\C^2(\R)$ satisfies
\begin{align}
  \label{eq:follmer_ito}\nonumber
f(x(t))={}& f(x(0))+\int_0^tf'(x(s-))\ud x(s)+\frac12\int_{(0,t]}f''(x(s-))\ud[x](s) \\ \nonumber
&{}+\sum_{0<s\leq t}\lf f(x(s))-f(x(s-))-f'(x(s-))\De x(s)-\frac12f''(x(s-))\De x(s)^2 \rg \\\nonumber
={}& f(x(0))+\int_0^tf'(x(s-))\ud x(s)+\frac12\int_{(0,t]}f''(x(s))\ud[x]^c(s) \\ 
&{}+\sum_{0<s\leq t}\lf f(x(s))-f(x(s-))-f'(x(s-))\De x(s) \rg,
\end{align}
where the pathwise definition
\beq \label{eq:follmer_int}
\int_0^tf'(x(s-))\ud x(s):=\Limn \sum_{t_i^n\leq t}f'(x(t_i^n))\lf x(t_{i+1}^n\wedge T)-x(t_i^n\wedge T)\rg
\eeq
is well posed by absolute convergence.
\end{proposition}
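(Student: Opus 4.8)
The plan is to prove the identity by a second-order Taylor expansion of the increments of $f\circ x$ along the partitions $\Pi$, and then to pass to the limit in each resulting sum, reading off the existence of the \follmer\ integral \eq{follmer_int} as a by-product. Fix $t\in[0,T]$ and write $\De^n_i x:=x(t^n_{i+1}\wedge t)-x(t^n_i\wedge t)$. Since the endpoints are truncated by $\wedge t$, the first-order increments of $f(x(\cdot))$ telescope, so for every $n$
\[
f(x(t))-f(x(0))=\sum_{t^n_i\leq t}\lf f(x(t^n_{i+1}\wedge t))-f(x(t^n_i\wedge t))\rg .
\]
Applying Taylor's formula with Lagrange remainder to each summand gives, identically in $n$,
\[
f(x(t))-f(x(0))=S_1^n+S_2^n+S_3^n ,
\]
where $S_1^n:=\sum_{t^n_i\leq t}f'(x(t^n_i))\,\De^n_i x$, $S_2^n:=\tfrac12\sum_{t^n_i\leq t}f''(x(t^n_i))\,(\De^n_i x)^2$, and the remainder is $S_3^n:=\tfrac12\sum_{t^n_i\leq t}\lf f''(\z^n_i)-f''(x(t^n_i))\rg(\De^n_i x)^2$ for suitable $\z^n_i$ between $x(t^n_i)$ and $x(t^n_{i+1}\wedge t)$.

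Next I would identify the limits of $S_2^n$ and $S_3^n$. Because $x$ is \cadlag, its range is contained in a compact set $K$ on which $f'$ and $f''$ are uniformly continuous, and the hypothesis that $x$ has \follmer's quadratic variation (\defin{qv-follmer}) says exactly that the discrete measures $\x_n$ of \eq{xin} converge weakly to the finite measure $\x$ whose cumulative function $[x]$ admits the Lebesgue decomposition \eq{dec-follmer}. Writing $S_2^n=\tfrac12\int f''(x(s))\,\x_n(\ud s)$ and passing to the limit --- taking care that at an atom of $\x$ sitting at a jump time $\tau$ the left endpoints satisfy $x(t^n_i)\to x(\tau-)$ --- yields $S_2^n\to\tfrac12\int_{(0,t]}f''(x(s-))\,\ud[x](s)$, which is the integral term of the first asserted form. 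For $S_3^n$, intervals not straddling a sizeable jump contribute negligibly, since there $f''\circ x$ oscillates little while $\sum_i(\De^n_i x)^2$ stays bounded by $[x](t)$, whereas an interval isolating a jump $\tau$ contributes in the limit precisely $f(x(\tau))-f(x(\tau-))-f'(x(\tau-))\De x(\tau)-\tfrac12 f''(x(\tau-))\De x(\tau)^2$; summing over jumps produces the series in the statement. Since the left-hand side equals the fixed number $f(x(t))-f(x(0))$ while $S_2^n$ and $S_3^n$ converge, $S_1^n$ must converge as well, and its limit is by definition the \follmer\ integral \eq{follmer_int}, which is thereby well posed; rearranging the three limits gives the first form, and the second follows from $\ud[x]=\ud[x]^c+\sum_s\De x^2(s)\,\d_s$ together with $f''(x(s-))=f''(x(s))$ off the countable jump set.

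The hard part is the joint passage to the limit in $S_2^n$ and $S_3^n$ across the jumps: weak convergence of $\x_n$ only guarantees $\int\phi\,\ud\x_n\to\int\phi\,\ud\x$ for \emph{continuous} test functions $\phi$, whereas $f''\circ x$ is discontinuous exactly at the jump times, which are the atoms of $\x$. I would handle this by the standard large-jump/small-jump decomposition: for fixed $\eps>0$ isolate the finitely many jumps of size exceeding $\eps$ and treat their intervals by hand (their left-endpoint values converging to $x(\tau-)$), while the contribution of the small jumps is dominated uniformly using $\sum_{0<s\leq t}\De x^2(s)\leq[x](t)<\infty$ and the modulus of continuity of $f''$ on $K$, and then let $\eps\downarrow0$. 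The same estimate shows the jump series converges absolutely, each term being bounded by $\sup_K|f''|\,\De x(s)^2$ with $\sum_s\De x(s)^2<\infty$. Finally, I note that the identity is, up to identifying \follmer's notion of quadratic variation with that of \defin{qv1}, the non-path-dependent specialization $F(t,x_t):=f(x(t))$ of the functional change-of-variable formula \thm{fif-d}, for which $\hd F\equiv0$, $\vd F(t,x_{t-})=f'(x(t-))$ and $\vd^2F(t,x_{t-})=f''(x(t-))$; this gives a second, self-contained route once the two quadratic-variation notions are matched on $\Pi$.
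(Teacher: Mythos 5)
The paper gives no proof of this proposition: it is stated in the Appendix as a quoted result of \citet{follmer}, so there is no internal argument to compare yours against. What you have written is, in outline, exactly \follmer's original proof --- telescoping, second-order Taylor expansion along $\pi^n$, identification of the second-order sum via the weak convergence of the measures $\xi_n$ of \eq{xin}, a large-jump/small-jump decomposition to handle the discontinuity of $f''\circ x$ at the atoms of $\xi$, and the observation that the first-order sums $S_1^n$ must then converge because the total is fixed and the other two pieces converge. That is the right strategy and the skeleton is sound, including the absolute convergence of the jump series via $\sum_s\Delta x(s)^2\leq[x](t)<\infty$.

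Two points need to be made explicit for the argument to close. First, the claim that on partition intervals not straddling a large jump ``$f''\circ x$ oscillates little'' is not a consequence of the mesh going to zero alone: a \cadlag\ path may accumulate infinitely many small jumps in an arbitrarily short interval. The missing ingredient is the standard \cadlag\ oscillation lemma: for every $\eps>0$ there is a finite partition $0=s_0<\dots<s_m=T$ with $\mathrm{osc}(x;[s_{j-1},s_j))<\eps$, whence every interval of sufficiently small length containing no jump of size exceeding $\eps$ has oscillation at most $3\eps$; this is what makes the uniform-continuity bound on $S_3^n$ (and the identification of $\lim_n S_2^n$) work. Second, your alternative route through \thm{fif-d} is not quite free of charge: that theorem carries the extra hypothesis \eq{ass_w} that the partitions asymptotically exhaust the jumps of $x$, which the present proposition does not assume, and the identification of \defin{qv1} with \defin{qv-follmer} holds only under the matching conditions discussed at the end of the Appendix; so it should be offered as a consistency check rather than a second proof. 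Minor bookkeeping: the last increment in \eq{follmer_int} is not truncated at $t$, so one should note that the discrepancy $f'(x(t^n_k))\left(x(t^n_{k+1})-x(t)\right)$ vanishes by right-continuity, and a word is needed when $t$ is itself an atom of $\xi$.
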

The integral on the left-hand side of \eq{follmer_int} is referred to as the \emph{F\"ollmer integral} of $f\circ x$ with respect to $x$ along $\Pi$.

In the multi-dimensional case, where $x$ is $\R^d$-valued and $f\in\C^2(\R^d)$, the pathwise \ito\ formula gives
\begin{align} \nonumber
f(x(t))={}& f(x(0))+\int_0^t\nabla f(x(s-))\cdot \ud x(s)+\frac12\int_{(0,t]}\mathrm{tr}\lf \nabla^2f(x(s))\ud[x]^c(s) \rg\\\label{eq:follmer_Dito}
&{}+\sum_{0<s\leq t}\lf f(x(s))-f(x(s-))-\nabla f(x(s-))\cdot\De x(s) \rg
\end{align}
and
$$\int_0^t\nabla f(x(s-))\cdot \ud x(s):=\Limn \sum_{t_i^n\leq t}\nabla f(x(t_i^n))\cdot\incrx{x},$$ 
where $[x]=([x^i,x^j])_{i,j=1,\ldots,d}$ and, for all $t\geq0$,
\begin{align*}
[x^i,x^j](t)={}&\frac12\lf[x^i+x^j](t)-[x^i](t)-[x^j](t)\rg\\
{}={}&[x^i,x^j]^c(t)+\sum_{0<s\leq t}\De x^i(s)\De x^j(s).
\end{align*}
F\"ollmer also pointed out that the class of functions with finite quadratic variation is stable under $\C^1$ transformations and, given $x$ with finite quadratic variation along $\Pi$ and $f\in\C^1(\R^d)$, the composite function $y=f\circ x$ has finite quadratic variation 
$$[y](t)=\int_{(0,t]}\mathrm{tr}\lf \nabla^2f(x(s))^\mathrm{t}\ud[x]^c(s)\rg+\sum_{0<s\leq t}\De y^2(s).$$

\subsubsection*{Norvai\u sa's quadratic variation and chain rules}
Norvai\u sa's notion of quadratic variation was proposed in \cite{norvaisa} in order to weaken the requirement of local 2-variation used to prove chain rules and integration-by-parts formulas for extended Riemann-Stieltjes integrals.
\begin{definition}\label{def:qv-norvaisa}
  Given a dense nested sequence $\l=\{\l_n\}_{n\geq1}$ of partitions of $[0,T]$, \emph{Norvai\u sa's quadratic $\l$-variation} of a regulated function $f:[0,T]\to\R$ is defined, if it exists, as a regulated function $H:[0,T]\to\R$ such that $H(0)=0$ and, for any $0\leq s\leq t\leq T$,
\beq\label{eq:Nqv}
H(t)-H(s)=\Limn s_2(f;\l_n\Cap[s,t]),
\eeq
\beq\label{eq:Njumps}
\De^-H(t)=(\De^-f(t))^2\quad \text{and}\quad \De^+H(t)=(\De^+f(t))^2,
\eeq
where $\l_n\Cap[s,t]:=(\l_n\cap[s,t])\cup\{s\}\cup\{t\}$, $\De^-x(t)=x(t)-x(t-)$, and $\De^+x(t)=x(t+)-x(t)$.
\end{definition}
In reality, Norvai\u sa's original definition is given in terms of an additive upper continuous function defined on the simplex of extended intervals of $[0,T]$, but he showed the equivalence to the definition given here and we chose to report the latter because it allows us to avoid introducing further notations.

Following F\"ollmer's approach in \cite{follmer}, \citet{norvaisa} also proved a chain rule for a function with finite $\l$-quadratic variation, involving a new type of integrals called Left (respectively Right) Cauchy $\l$-integrals.
We report here the formula obtained for the left integral, but a symmetric formula holds for the right integral.
 Given two regulated functions $f,g$ on $[0,T]$ and a dense nested sequence of partitions $\l=\{\l_n\}$, then \emph{the Left Cauchy $\l$-integral} $(LC)\!\int \phi\ud_\l g$ is defined on $[0,T]$ if there exists a regulated function $\Phi$ on $[0,T]$ such that $\Phi(0)=0$ and, for any $0\leq u<v\leq T$, 
$$\bea{c}\Phi(v)-\Phi(u)=\Limn S_{LC}(\phi,g;\l_n\Cap[u,v]),\\
\De^-\Phi(v)=\phi(v-)\De^-g(v),\quad\De^+\Phi(u)=\phi\De^+g(u),\ea$$
where $$S_{LC}(\phi,g;\k):=\sum_{i=0}^{m-1}\phi(t_i)(g(t_{i+1})-g(t_i))\quad\text{for any }\k=\{t_i\}_{i=0}^m.$$
In such a case, denote $(LC)\!\int_u^v\phi\ud_\l g:=\Phi(v)-\Phi(u).$
\begin{proposition}[Proposition 1.4 in \cite{norvaisa}]
  Let $g$ be a regulated function on $[0,T]$ and $\l=\{\l_n\}$ a dense nested sequence of partitions such that $\{t:\,\De^+g(t)\neq0\}\subset\cup_{n\in\NN}\l_n$. The following are equivalent:
  \begin{enumerate}[(i)]
  \item $g$ has Norvai\u sa's $\l$-quadratic variation;
  \item for any $C^1$ function $\phi$, $\phi\circ g$ is Left Cauchy $\l$-integrable on $[0,T]$ and, for any $0\leq u<v\leq T$,
\begin{align}\label{eq:chainrule-LC}
\Phi\circ g(v)-\Phi\circ g(u)={}&(LC)\!\int_u^v(\phi\circ g)\ud_\l g+\frac12\int_u^v(\phi'\circ g)\ud[g]^c_\l\\
&{}+\sum_{t\in[u,v)}\lf\De^-(\Phi\circ g)(t)-(\phi\circ g)(t-)\De^-g(t)\rg \nonumber\\
&{}+\sum_{t\in(u,v]}\lf\De^+(\Phi\circ g)(t)-(\phi\circ g)(t)\De^+g(t)\rg. \nonumber
\end{align}
  \end{enumerate}
\end{proposition}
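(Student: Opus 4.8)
The plan is to prove the two implications separately. Throughout, let $\Phi$ be a $C^2$ primitive of the prescribed $C^1$ function $\phi$, so that $\Phi'=\phi$, $\Phi''=\phi'$, and \eq{chainrule-LC} is nothing but the pathwise It\^o formula for $\Phi\circ g$, in the spirit of F\"ollmer's formula \eq{follmer_ito} but with the refinement (Left Cauchy) integral in place of the F\"ollmer integral.

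I would treat (ii)$\Rightarrow$(i) first, as it is the cleaner direction and requires only the existence half of (ii) for the single function $\phi(x)=x$. For that $\phi$, hypothesis (ii) provides the Left Cauchy integral $(LC)\!\int_u^v g\,\ud_\l g=\Limn S_{LC}(g,g;\l_n\Cap[u,v])$. The engine of the argument is the elementary identity, valid for every partition $\l_n\Cap[u,v]=\{s_0<\dots<s_m\}$,
\[
s_2(g;\l_n\Cap[u,v])=\sum_i\big(g(s_{i+1})-g(s_i)\big)^2=\big(g(v)^2-g(u)^2\big)-2\,S_{LC}(g,g;\l_n\Cap[u,v]),
\]
obtained by expanding $g(s_{i+1})^2-g(s_i)^2$ and telescoping. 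Since the right-hand side converges as $n\to\infty$, so does $s_2(g;\l_n\Cap[u,v])$, and setting $H(t):=\Limn s_2(g;\l_n\Cap[0,t])$ with $H(0)=0$ yields \eq{Nqv}. For the jump normalisations \eq{Njumps} I would rewrite $H(t)=g(t)^2-g(0)^2-2\,(LC)\!\int_0^t g\,\ud_\l g$ and insert the jump conditions of the Left Cauchy integral, namely $\De^-[(LC)\!\int_0^\cdot g\,\ud_\l g](t)=g(t-)\De^-g(t)$ and $\De^+[(LC)\!\int_0^\cdot g\,\ud_\l g](t)=g(t)\De^+g(t)$; a one-line computation then reduces $\De^-H(t)=\De^-(g^2)(t)-2g(t-)\De^-g(t)$ to $(\De^-g(t))^2$, and symmetrically for $\De^+H(t)$.

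For (i)$\Rightarrow$(ii), fix $\phi\in C^1$ and telescope along $\l_n\Cap[u,v]$:
\[
\Phi\circ g(v)-\Phi\circ g(u)=\sum_i\big(\Phi(g(s_{i+1}))-\Phi(g(s_i))\big),
\]
expanding each summand by second-order Taylor as $\phi(g(s_i))\,\De_i g+\tfrac12\phi'(g(s_i))(\De_i g)^2+R_i$, where $\De_i g:=g(s_{i+1})-g(s_i)$ and $R_i=\tfrac12\big(\phi'(\xi_i)-\phi'(g(s_i))\big)(\De_i g)^2$ for an intermediate point $\xi_i$. The first sum is exactly $S_{LC}(\phi\circ g,g;\cdot)$ and converges, by the definition of the Left Cauchy integral, to $(LC)\!\int_u^v(\phi\circ g)\,\ud_\l g$; the second is a Riemann--Stieltjes sum of the regulated integrand $\phi'\circ g$ against the measure $\ud[g]_\l$ supplied by (i), converging to $\int_u^v(\phi'\circ g)\,\ud[g]_\l$. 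Finally I would invoke the Lebesgue decomposition of $[g]_\l$ together with \eq{Njumps}, which identifies the atoms of $[g]_\l$ with $(\De^\pm g)^2$: this splits $\int_u^v(\phi'\circ g)\,\ud[g]_\l$ into $\int_u^v(\phi'\circ g)\,\ud[g]^c_\l$ plus jump contributions that, merged with the exact jump remainders surviving from the Taylor step, recombine into the two correction sums of \eq{chainrule-LC}.

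The crux of the whole statement is the jump bookkeeping in this last direction. Across a discontinuity the quadratic Taylor approximation carries an $O(1)$ error, so $\sum_i R_i$ cannot be dismissed by a single uniform-continuity estimate as in the continuous F\"ollmer formula; instead I would partition the increments into those whose interval straddles a jump of $g$ and those lying where $g$ oscillates little, bound the latter using uniform continuity of $\phi'$ on the compact range of $g$ and the boundedness of $s_2(g;\l_n\Cap[u,v])$, and match the former \emph{exactly}, one jump at a time, against the discrete corrections $\De^-(\Phi\circ g)-(\phi\circ g)(\cdot-)\,\De^-g$ and $\De^+(\Phi\circ g)-(\phi\circ g)\,\De^+g$. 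The standing hypothesis $\{t:\De^+g(t)\neq0\}\subset\cup_n\l_n$ is indispensable here: because the Left Cauchy sums sample the integrand at left endpoints, a right jump is resolved correctly only once its location has become a partition point, which nestedness guarantees for points of $\cup_n\l_n$. The remaining care is to justify the unconditional convergence of both jump series (via $\sum(\De^\pm g)^2\le H(T)$ and the continuity of $\phi,\phi'$ on compacts), which is precisely what legitimises the rearrangement producing \eq{chainrule-LC}.
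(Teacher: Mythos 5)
The paper itself offers no proof of this statement: it is imported verbatim in the appendix survey as Proposition 1.4 of \cite{norvaisa}, so the only benchmark is the original F\"ollmer-style argument that the paper explicitly attributes to Norvai\v sa, and your plan follows exactly that route. Your direction (ii)$\Rightarrow$(i) is complete and correct as written: the telescoping identity $s_2(g;\l_n\Cap[u,v])=g(v)^2-g(u)^2-2S_{LC}(g,g;\l_n\Cap[u,v])$, applied with $\phi=\mathrm{id}$, transfers convergence of the Left Cauchy sums to the sums $s_2$, the resulting $H(t)=g(t)^2-g(0)^2-2(LC)\!\int_0^t g\,\ud_\l g$ is regulated (indeed non-decreasing, since $s_2\geq0$) because the integral function is regulated by definition, and the defining jump identities of the Left Cauchy integral reduce \eq{Njumps} to the one-line computations you indicate. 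It is also a nice economy that you use only the existence half of (ii) for the single function $\phi=\mathrm{id}$, rather than the chain rule formula itself.

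In (i)$\Rightarrow$(ii) there is one logical slip you must repair, though your own closing paragraph essentially contains the repair. You write that the first sum ``converges, by the definition of the Left Cauchy integral, to $(LC)\!\int_u^v(\phi\circ g)\,\ud_\l g$'' --- but in this direction Left Cauchy integrability is the \emph{conclusion}, and the definition supplies nothing; as phrased this is circular. The correct order is: prove that the second-order sums $\frac12\sum_i\phi'(g(s_i))(\De_i g)^2$ and the Taylor remainders $\sum_i R_i$ converge (splitting increments straddling jumps of size larger than $\eps$ from small-oscillation increments, using $\sum(\De^\pm g)^2\leq H(T)$ and uniform continuity of $\phi'$ on the compact range of $g$, exactly as you describe), and only then read off the convergence of $S_{LC}(\phi\circ g,g;\l_n\Cap[u,v])$ as the residual of the telescoped Taylor expansion; this simultaneously produces the candidate integral function. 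You should also state explicitly that this limit function satisfies the two defining jump identities $\De^-\Phi(t)=(\phi\circ g)(t-)\De^-g(t)$ and $\De^+\Phi(t)=(\phi\circ g)(t)\De^+g(t)$, without which ``Left Cauchy $\l$-integrable'' is not established; happily this is immediate by computing the one-sided jumps of the right-hand side of \eq{chainrule-LC}, since the $[g]^c_\l$ integral is continuous and the two correction series absorb precisely the jumps of $\Phi\circ g$. With that mis-ordering fixed and the jump-identity check added, your sketch is a correct proof along the same lines as the cited source, including the correct diagnosis of why the hypothesis $\{t:\,\De^+g(t)\neq0\}\subset\cup_{n\in\NN}\l_n$, together with nestedness, is needed to resolve right jumps under left-endpoint sampling.
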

Note that the change of variable formula \eq{chainrule-LC} gives the F\"ollmer's formula \eq{follmer_ito} when $g$ is right-continuous, and the Left Cauchy $\l$-integral coincides with the F\"ollmer integral along $\l$ defined in \eq{follmer_int}.

\subsubsection*{Vovk's quadratic variation}
\citet{vovk-cadlag} defines a notion of quadratic variation along a sequence of partitions not necessarily dense in $[0,T]$ and uses it to investigate the properties of \lq typical price paths\rq, that are price paths which rule out arbitrage opportunities in his pathwise framework, following a game-theoretic probability approach.
\begin{definition}\label{def:qv-vovk}
  Given a nested sequence $\Pi=\{\pi_n\}_{n\geq1}$ of partitions of $[0,T]$, $\pi_n=(t_i^n)_{i=0,\ldots,m(n)}$ for all $n\in\NN$, a \cadlag\ function $x:[0,T]\to\R$ is said to have \emph{Vovk's quadratic variation along} $\Pi$ if the sequence $\{A^{n,\Pi}\}_{n\in\NN}$ of functions defined by 
$$A^{n,\Pi}(t):=\sum_{i=0}^{m(n)-1}(x(t^n_{i+1}\wedge t)-x(t^n_i\wedge t))^2,\quad t\in[0,T],$$
converges uniformly in time. In this case, the limit is denoted by $A^\Pi$ and called the Vovk's quadratic variation of $x$ along $\Pi$.
\end{definition}
An interesting result in \cite{vovk-cadlag} is that typical paths have the Vovk's quadratic variation along a specific nested sequence $\{\t_n\}_{n\geq1}$ of partitions composed by stopping times and such that, on each realized path $\w$, $\{\t_n(\w)\}_{n\geq1}$ \emph{exhausts} $\w$, i.e. $\{t:\,\De\w(t)\neq0\}\subset\cup_{n\in\NN}\t_n(\w)$ and, for each open interval $(u,v)$ in which $\w$ is not constant, $(u,v)\cap(\cup_{n\in\NN}\t_n(\w))\neq\emptyset$.

The most evident difference between definitions \ref{def:qv1}, \ref{def:qv-follmer}, \ref{def:qv-norvaisa}, \ref{def:qv-vovk} is that the first two of them require the sequence of partitions to be dense, the third one requires the sequence of partitions to be dense and nested, and the last one requires a nested sequence of partitions.
Moreover, Norvai\v sa's definition is given for a regulated, rather than \cadlag, function.

Vovk proved that for a nested sequence $\Pi=\{\pi_n\}_{n\geq1}$ of partitions of $[0,T]$ that exhausts $\w\in D([0,T],\R)$, the following are equivalent:
\begin{enumerate}[(a)]
\item $\w$ has Norvai\v sa's quadratic $\Pi$-variation;
\item $\w$ has Vovk's quadratic variation along $\Pi$;
\item $\w$ has \emph{weak quadratic variation of $\w$ along $\Pi$}, i.e. there exists a \cadlag\ function $V:[0,T]\to\R$ such that
$$V(t)=\Limn\sum_{i=0}^{m(n)-1}(x(t^n_{i+1}\wedge t)-x(t^n_i\wedge t))^2$$
for all points $t\in[0,T]$ of continuity of $V$ and it satisfies \eq{qv-jumps} where $[x]_\Pi$ is replace by $V$.
\end{enumerate}
Moreover, if any of the above condition is satisfied, then $H=A^\Pi=V$.

If, furthermore, $\Pi$ is also dense, than $\w$ has F\"ollmer's quadratic variation along $\Pi$ if and only if it has any of the quadratic variations in (a)-(c), in which case  $H=A^\Pi=V=[\w]$.

In this thesis, we will always consider the quadratic variation of a \cadlag\ path $w$ along a dense nested sequence $\Pi$ of partitions that exhausts $\w$, in which case our \defin{qv1} is equivalent to all the other ones mentioned above. It is sufficient to note that condition (b) implies that $\w$ has finite quadratic variation according to \defin{qv1} and $[\w]=A$, because the properties in  \defin{qv1} imply the ones in \defin{qv-follmer}, which, by Proposition 4 in \cite{vovk-cadlag}, imply condition (b). Therefore, we denote $\bar k(n,t):=\max\{i=0,\ldots,m(n)-1:\,t^n_i\leq t\}$ and note that
\begin{multline*}
A^{n,\Pi}(t)-\sum_{\stackrel{i=0,\ldots,m(n)-1:}{t^n_i\leq t}}(x(t^n_{i+1})-x(t^n_{i}))^2=\\
=(\w(t)-\w(t^n_{\bar k(n,t)}))^2-(\w(t^n_{\bar k(n,t)+1})-\w(t^n_{\bar k(n,t)}))^2\limn 0
\end{multline*}
by right-continuity of $\w$ if $t\in\cup_{n\in\NN}\pi_n$, and by the assumption that $\Pi$ exhausts $\w$ if $t\notin\cup_{n\in\NN}\pi_n$.

\bibliographystyle{plainnat}
\bibliography{thesis}

\def\cprime{$'$} \def\cprime{$'$} \def\cprime{$'$}
  \def\lfhook#1{\setbox0=\hbox{#1}{\ooalign{\hidewidth
  \lower1.5ex\hbox{'}\hidewidth\crcr\unhbox0}}} \def\cprime{$'$}
  \def\cprime{$'$} \def\cprime{$'$} \def\cprime{$'$} \def\cprime{$'$}
  \def\polhk#1{\setbox0=\hbox{#1}{\ooalign{\hidewidth
  \lower1.5ex\hbox{`}\hidewidth\crcr\unhbox0}}}
\begin{thebibliography}{34}
\providecommand{\natexlab}[1]{#1}
\providecommand{\url}[1]{\texttt{#1}}
\expandafter\ifx\csname urlstyle\endcsname\relax
  \providecommand{\doi}[1]{doi: #1}\else
  \providecommand{\doi}{doi: \begingroup \urlstyle{rm}\Url}\fi

\bibitem[Bichteler(1981)]{bichteler}
Klaus Bichteler.
\newblock Stochastic integration and {$L^{p}$}-theory of semimartingales.
\newblock \emph{Ann. Probab.}, 9\penalty0 (1):\penalty0 49--89, 1981.
\newblock ISSN 0091-1798.
\newblock URL
  \url{http://links.jstor.org/sici?sici=0091-1798(198102)9:1<49:SIAOS>2.0.CO;2-\&origin=MSN}.

\bibitem[Bick and Willinger(1994)]{bickwill}
Avi Bick and Walter Willinger.
\newblock Dynamic spanning without probabilities.
\newblock \emph{Stochastic Process. Appl.}, 50\penalty0 (2):\penalty0 349--374,
  1994.
\newblock ISSN 0304-4149.
\newblock \doi{10.1016/0304-4149(94)90128-7}.
\newblock URL \url{http://dx.doi.org/10.1016/0304-4149(94)90128-7}.

\bibitem[Cont(2012)]{cont-notes}
Rama Cont.
\newblock \emph{Functional {I}to calculus and functional {K}olmogorov
  equations}.
\newblock Lecture notes of the {B}arcelona {S}ummer School on {S}tochastic
  Analysis, Centre de Recerca Matematica, July 2012. 2012.

\bibitem[Cont and Fourni{\'e}(2010)]{contf2010}
Rama Cont and David-Antoine Fourni{\'e}.
\newblock Change of variable formulas for non-anticipative functionals on path
  space.
\newblock \emph{J. Funct. Anal.}, 259\penalty0 (4):\penalty0 1043--1072, 2010.
\newblock ISSN 0022-1236.
\newblock \doi{10.1016/j.jfa.2010.04.017}.
\newblock URL \url{http://dx.doi.org/10.1016/j.jfa.2010.04.017}.

\bibitem[de~Finetti(1931)]{definetti31}
Bruno de~Finetti.
\newblock Sul significato soggettivo della probabilit{\`a}.
\newblock \emph{{\guillemotleft}Fundamenta Mathematicae}, 17:\penalty0
  298--329, 1931.

\bibitem[de~Finetti(1937)]{definetti37}
Bruno de~Finetti.
\newblock La pr{\'e}vision : ses lois logiques, ses sources subjectives.
\newblock \emph{Annales de l'institut Henri Poincar{\'e}}, 7\penalty0
  (1):\penalty0 1--68, 1937.
\newblock URL \url{http://eudml.org/doc/79004}.

\bibitem[Dudley and Norvai{\v{s}}a(2011)]{dudley-norvaisa}
R.~M. Dudley and R.~Norvai{\v{s}}a.
\newblock \emph{Concrete functional calculus}.
\newblock Springer Monographs in Mathematics. Springer, New York, 2011.
\newblock ISBN 978-1-4419-6949-1.
\newblock \doi{10.1007/978-1-4419-6950-7}.
\newblock URL \url{http://dx.doi.org/10.1007/978-1-4419-6950-7}.

\bibitem[Dudley and Norvai{\v{s}}a(1999)]{dudley-norvaisa99}
Richard~M. Dudley and Rimas Norvai{\v{s}}a.
\newblock \emph{Differentiability of six operators on nonsmooth functions and
  {$p$}-variation}, volume 1703 of \emph{Lecture Notes in Mathematics}.
\newblock Springer-Verlag, Berlin, 1999.
\newblock ISBN 3-540-65975-7.
\newblock With the collaboration of Jinghua Qian.

\bibitem[Dupire()]{dupire}
Bruno Dupire.
\newblock Functional it\^o calculus.
\newblock July 17, 2009, SSRN Preprint.

\bibitem[El~Karoui et~al.(1998)El~Karoui, Jeanblanc-Picqu{\'e}, and
  Shreve]{elkaroui}
Nicole El~Karoui, Monique Jeanblanc-Picqu{\'e}, and Steven~E. Shreve.
\newblock Robustness of the {B}lack and {S}choles formula.
\newblock \emph{Math. Finance}, 8\penalty0 (2):\penalty0 93--126, 1998.
\newblock ISSN 0960-1627.
\newblock \doi{10.1111/1467-9965.00047}.
\newblock URL \url{http://dx.doi.org/10.1111/1467-9965.00047}.

\bibitem[F{\"o}llmer(1981)]{follmer}
Hans F{\"o}llmer.
\newblock Calcul d'{I}t\^o sans probabilit\'es.
\newblock In \emph{Seminar on {P}robability, {XV} ({U}niv. {S}trasbourg,
  {S}trasbourg, 1979/1980) ({F}rench)}, volume 850 of \emph{Lecture Notes in
  Math.}, pages 143--150. Springer, Berlin, 1981.

\bibitem[F{\"o}llmer and Schied(2013)]{follmer-schied}
Hans F{\"o}llmer and Alexander" Schied.
\newblock Probabilistic aspects of finance.
\newblock \emph{Bernoulli}, 19\penalty0 (4):\penalty0 1306--1326, Sep 2013.
\newblock \doi{10.3150/12-BEJSP05}.
\newblock URL \url{http://dx.doi.org/10.3150/12-BEJSP05}.

\bibitem[Karandikar(1995)]{karandikar}
Rajeeva~L. Karandikar.
\newblock On pathwise stochastic integration.
\newblock \emph{Stochastic Process. Appl.}, 57\penalty0 (1):\penalty0 11--18,
  1995.
\newblock ISSN 0304-4149.
\newblock \doi{10.1016/0304-4149(95)00002-O}.
\newblock URL \url{http://dx.doi.org/10.1016/0304-4149(95)00002-O}.

\bibitem[Karatzas and Kardaras(2007)]{kk2007}
Ioannis Karatzas and Constantinos Kardaras.
\newblock The num{\'e}raire portfolio in semimartingale financial models.
\newblock \emph{Finance and Stochastics}, 11\penalty0 (4):\penalty0 447--493,
  2007.

\bibitem[Kardaras(2010)]{kardaras}
Constantinos Kardaras.
\newblock Finitely additive probabilities and the fundamental theorem of asset
  pricing.
\newblock In \emph{Contemporary quantitative finance}, pages 19--34. Springer,
  2010.

\bibitem[Knight(1921)]{knight}
Frank Knight.
\newblock \emph{Risk, uncertainty and profit}.
\newblock Houghton Mifflin, Boston, 1921.

\bibitem[Love and Young(1938)]{love-young}
E.~R. Love and L.~C. Young.
\newblock On {F}ractional {I}ntegration by {P}arts.
\newblock \emph{Proc. London Math. Soc.}, S2-44\penalty0 (1):\penalty0 1--35,
  1938.
\newblock ISSN 0024-6115.
\newblock \doi{10.1112/plms/s2-44.1.1}.
\newblock URL \url{http://dx.doi.org/10.1112/plms/s2-44.1.1}.

\bibitem[Norvai{\v{s}}a(2002)]{norvaisa02}
R.~Norvai{\v{s}}a.
\newblock Chain rules and {$p$}-variation.
\newblock \emph{Studia Math.}, 149\penalty0 (3):\penalty0 197--238, 2002.
\newblock ISSN 0039-3223.
\newblock \doi{10.4064/sm149-3-1}.
\newblock URL \url{http://dx.doi.org/10.4064/sm149-3-1}.

\bibitem[Norvai{\v{s}}a(2001)]{norvaisa}
Rimas Norvai{\v{s}}a.
\newblock {Quadratic variation, p-variation and integration with applications
  to stock price modelling}.
\newblock \emph{ArXiv e-prints}, August 2001.
\newblock URL \url{http://arxiv.org/abs/math/0108090v1}.

\bibitem[Nutz(2012)]{nutz-int}
Marcel Nutz.
\newblock Pathwise construction of stochastic integrals.
\newblock \emph{Electron. Commun. Probab.}, 17:\penalty0 no. 24, 7, 2012.
\newblock ISSN 1083-589X.
\newblock \doi{10.1214/ECP.v17-2099}.
\newblock URL \url{http://dx.doi.org/10.1214/ECP.v17-2099}.

\bibitem[Perkowski(2013)]{perkowski-thesis}
Nicolas Perkowski.
\newblock \emph{Studies of robustness in stochastic analysis and mathematical
  finance}.
\newblock PhD thesis, Humboldt-Universit{\"a}t zu Berlin, 2013.

\bibitem[Perkowski and Pr\"omel(2014)]{perk-promel}
Nicolas Perkowski and David~J. Pr\"omel.
\newblock {Pathwise stochastic integrals for model free finance}.
\newblock \emph{ArXiv}, November 2014.
\newblock URL \url{http://arxiv.org/abs/1311.6187v2}.

\bibitem[{Riedel}(2011)]{riedel}
F.~{Riedel}.
\newblock {Finance Without Probabilistic Prior Assumptions}.
\newblock \emph{ArXiv e-prints}, July 2011.
\newblock URL \url{http://arxiv.org/abs/1107.1078}.

\bibitem[{Riga}(2016)]{riga-thesis}
Candia {Riga}.
\newblock {Pathwise functional calculus and applications to continuous-time
  finance (PhD thesis)}.
\newblock \emph{ArXiv e-prints}, February 2016.

\bibitem[Schied(2014)]{schied-CPPI}
Alexander Schied.
\newblock Model-free {CPPI}.
\newblock \emph{Journal of Economic Dynamics and Control}, 40\penalty0
  (0):\penalty0 84 -- 94, 2014.
\newblock ISSN 0165-1889.
\newblock \doi{http://dx.doi.org/10.1016/j.jedc.2013.12.010}.
\newblock URL
  \url{http://www.sciencedirect.com/science/article/pii/S0165188913002467}.

\bibitem[Sondermann(2006)]{sondermann}
Dieter Sondermann.
\newblock \emph{Introduction to stochastic calculus for finance}, volume 579 of
  \emph{Lecture Notes in Economics and Mathematical Systems}.
\newblock Springer-Verlag, Berlin, 2006.

\bibitem[Tao(2011)]{tao}
Terence Tao.
\newblock \emph{An introduction to measure theory}, volume 126.
\newblock American Mathematical Soc., 2011.

\bibitem[Vovk(2008)]{vovk-vol}
Vladimir Vovk.
\newblock Continuous-time trading and the emergence of volatility.
\newblock \emph{Electron. Commun. Probab.}, 13:\penalty0 319--324, 2008.
\newblock ISSN 1083-589X.

\bibitem[Vovk(2011{\natexlab{a}})]{vovk-cadlag}
Vladimir Vovk.
\newblock {Ito calculus without probability in idealized financial markets}.
\newblock \emph{ArXiv e-prints}, August 2011{\natexlab{a}}.
\newblock URL \url{http://arxiv.org/abs/1108.0799v2}.

\bibitem[Vovk(2011{\natexlab{b}})]{vovk-rough}
Vladimir Vovk.
\newblock Rough paths in idealized financial markets.
\newblock \emph{Lith. Math. J.}, 51\penalty0 (2):\penalty0 274--285,
  2011{\natexlab{b}}.
\newblock ISSN 0363-1672.

\bibitem[Vovk(2012)]{vovk-proba}
Vladimir Vovk.
\newblock Continuous-time trading and the emergence of probability.
\newblock \emph{Finance Stoch.}, 16\penalty0 (4):\penalty0 561--609, 2012.
\newblock ISSN 0949-2984.

\bibitem[Willinger and Taqqu(1989)]{willtaq}
Walter Willinger and Murad~S. Taqqu.
\newblock Pathwise stochastic integration and applications to the theory of
  continuous trading.
\newblock \emph{Stochastic Process. Appl.}, 32\penalty0 (2):\penalty0 253--280,
  1989.
\newblock ISSN 0304-4149.
\newblock \doi{10.1016/0304-4149(89)90079-3}.
\newblock URL \url{http://dx.doi.org/10.1016/0304-4149(89)90079-3}.

\bibitem[Young(1936)]{young36}
L.~C. Young.
\newblock An inequality of the {H}\"older type, connected with {S}tieltjes
  integration.
\newblock \emph{Acta Math.}, 67\penalty0 (1):\penalty0 251--282, 1936.
\newblock ISSN 0001-5962.
\newblock \doi{10.1007/BF02401743}.
\newblock URL \url{http://dx.doi.org/10.1007/BF02401743}.

\bibitem[Young(1938)]{young38}
L.~C. Young.
\newblock General inequalities for {S}tieltjes integrals and the convergence of
  {F}ourier series.
\newblock \emph{Math. Ann.}, 115\penalty0 (1):\penalty0 581--612, 1938.
\newblock ISSN 0025-5831.
\newblock \doi{10.1007/BF01448958}.
\newblock URL \url{http://dx.doi.org/10.1007/BF01448958}.

\end{thebibliography}

\subsection*{Acknowledgments}
We particularly thank Rama Cont and Sara Biagini, who provided insight and expertise that greatly assisted the research.
\\
\end{document}